\documentclass[journal]{IEEEtran}

\usepackage[noadjust]{cite}

\usepackage{mathtools} 
\usepackage{amsmath}
\usepackage{amssymb}
\usepackage{mathtools}
\usepackage{amsthm}
\usepackage{xcolor}
\usepackage{balance}
\usepackage{subcaption}
\usepackage{float}
\usepackage{bbm}
\usepackage[ruled,vlined]{algorithm2e}
\SetKwInput{KwInput}{Input}
\SetKwInput{KwOutput}{Output}
\SetKwComment{Comment}{/*}{ */}
\let\oldnl\nl
\newcommand{\nonl}{\renewcommand{\nl}{\let\nl\oldnl}}
\newcommand{\resp}[1]{#1}
\newcommand{\rev}[1]{#1}

\theoremstyle{plain}
\newtheorem{theorem}{Theorem}[section]

\newtheorem{lemma}[theorem]{Lemma}

\theoremstyle{definition}
\newtheorem{definition}[theorem]{Definition}
\newtheorem{assumption}[theorem]{Assumption}
\theoremstyle{remark}
\newtheorem{remark}[theorem]{Remark}

\def\de{=}

\DeclareMathOperator*{\argmin}{arg\,min}
\newcommand\eqnum{\addtocounter{equation}{1}\tag{\theequation}}


\hyphenation{op-tical net-works semi-conduc-tor}

\onecolumn
\begin{document}

\title{Robust Score-Based Quickest Change Detection}

\author{Sean~Moushegian,
        Suya~Wu,
        Enmao~Diao,
        Jie~Ding,
        Taposh Banerjee,
        and~Vahid~Tarokh

\thanks{Sean Moushegian, Suya Wu, Enmao Diao, and Vahid Tarokh are with the Department of Electrical and Computer Engineering, 
Duke University, Durham, North Carolina 27708.    Jie Ding is with the School of Statistics, University of Minnesota Twin Cities, Minneapolis, MN 55455.  Taposh Banerjee is with the Department of Industrial Engineering,
University of Pittsburgh, Pittsburgh, Pennsylvania 15213.  

An earlier version of this paper was presented at the 39th Conference on Uncertainty in Artificial Intelligence.

This paper was accepted to IEEE Transactions on Information Theory: 
S. Moushegian, S. Wu, E. Diao, J. Ding, T. Banerjee and V. Tarokh, ``Robust Score-Based Quickest Change Detection," in IEEE Transactions on Information Theory, vol. 71, no. 7, pp. 5539-5555, July 2025, doi: 10.1109/TIT.2025.3566677, available at https://doi.org/10.1109/TIT.2025.3566677 and https://ieeexplore.ieee.org/document/10982143.

© 2025 IEEE. Personal use of this material is permitted. Permission from IEEE must be obtained for all other uses, in any current or future media, including reprinting/republishing this material for advertising or promotional purposes, creating new collective works, for resale or redistribution to servers or lists, or reuse of any copyrighted component of this work in other works.  See https://www.ieee.org/publications/rights/index.html for more information.
}}

\maketitle


\begin{abstract}
   Methods in the field of quickest change detection rapidly detect in real-time a change in the data-generating distribution of an online data stream.  Existing methods have been able to detect this change point when the densities of the pre- and post-change distributions are known.  Recent work has extended these results to the case where the pre- and post-change distributions are known only by their score functions.  This work considers the case where the pre- and post-change score functions are known only to correspond to distributions in two disjoint sets.  This work selects a pair of least-favorable distributions from these sets to robustify the existing score-based quickest change detection algorithm, the properties of which are studied.  This paper calculates the least-favorable distributions for specific model classes and provides methods of estimating the least-favorable distributions for common constructions.  Simulation results are provided demonstrating the performance of our robust change detection algorithm.
\end{abstract}

\begin{IEEEkeywords}
Quickest Change Detection, Change-Point Detection, Score-based methods, Robust detection
\end{IEEEkeywords}

\IEEEpeerreviewmaketitle
 
\section{Introduction}

\IEEEPARstart{I}{n}
the fields of sensor networks, cyber-physical systems, biology, and neuroscience, the statistical properties of online data streams can suddenly change in response to some application-specific event (\cite{veeravalli2014quickest,  basseville1993detection, poor2008quickest, tartakovsky2014sequential}).  The field of quickest change detection aims to detect the change in the underlying distribution of this observed stochastic process as rapidly as possible and in real-time -- but to do so with minimal risk of detecting a change before it actually occurs.  When the pre- and post-change probability density functions of the data are known, three important algorithms in the literature are the Shiryaev algorithm (\cite{shiryaev1963optimum, tartakovsky2005general}), the cumulative sum (CUSUM) algorithm (\cite{lorden1971procedures, moustakides1986optimal, lai1998information, page1955test}), and the Shiryaev-Roberts algorithm (\cite{pollak1985optimal, roberts1966comparison}).  These three tests calculate a sequence of statistics using the likelihood ratio of the observations and detection occurs when the statistics exceed a threshold (see \cite{shiryaev1963optimum,tartakovsky2005general,lorden1971procedures,moustakides1986optimal,lai1998information,pollak1985optimal}).

The main challenge in implementing a change detection algorithm in practice is that the pre- and post-change distributions are often not precisely known. This challenge is amplified when the data is high-dimensional. Specifically, in several machine learning applications, the data models may not lend themselves to explicit distributions. For example, energy-based models~(\cite{LeCun2006ATO}) capture dependencies between observed and latent variables based on their associated energy (an unnormalized probability), and score-based deep generative models~\cite{song2020score} generate high-quality images by learning the score function (the gradient of the log density function). These models can be computationally cumbersome to normalize as probabilistic density functions. Thus, optimal algorithms from the change detection literature, which are likelihood ratio-based tests, are computationally expensive to implement. 

This issue is partially addressed in \cite{wu_IT_2024} which proposed the SCUSUM algorithm, a Hyv\"arinen score-based (\cite{hyvarinen2005estimation}) modification of the CUSUM algorithm for quickest change detection. 
In \cite{wu_IT_2024} it is shown that the SCUSUM algorithm is consistent, and expressions for the average detection delay and the mean time to a false alarm of SCUSUM are provided. The 
Hyv\"arinen score is invariant to scale and hence can be applied to unnormalized models. This makes the SCUSUM algorithm highly efficient as compared to the classical CUSUM algorithm for high-dimensional models.

The main limitation of the SCUSUM algorithm is that its effectiveness is contingent on knowing the precise pre- and post-change unnormalized or score-based models, i.e., knowing the pre- and post-change models within a normalizing constant or knowing the gradient of log density. In practice, due to a limited amount of training data, the models can only be learned within an uncertainty class. To detect the change effectively, an algorithm must be robust against these modeling uncertainties. The SCUSUM algorithm is not robust in this sense. Specifically, if not carefully designed, the SCUSUM algorithm can fail to detect several (in fact, infinitely many) pre- and post-change scenarios. 

As one concrete example, consider a device that monitors patient health in a hospital setting.  The device observes a time series of biomedical data (e.g. patient heart rate) and detects when a patient changes from stable to unstable health condition.  While it is possible to learn general patterns between patient stability and heart rate, it is also true that every patient has a unique resting heart rate.  An SCUSUM algorithm trained on stable and unstable patient data will suffer premature false alarms or significant detection delays when it operates on a patient with an unusual resting heart rate due to this lack of robustness.

In this paper, a robust score-based variant of the CUSUM algorithm, called RSCUSUM, is proposed for the problem of quickest change detection.  Under the assumption that the pre- and post-change uncertainty classes are disjoint and convex, we show that the algorithm is robust, i.e., can consistently detect changes for every possible pre- and post-change model. This consistency is achieved by designing the RSCUSUM algorithm using a pair of distributions from the pre- and post-change uncertainty classes that are least favorable in a well-defined sense.

The problem of optimal robust quickest change detection is studied in \cite{unnikrishnan2011minimax}. In a minimax setting, the optimal algorithm is the CUSUM algorithm designed using the least favorable distributions. 
The robust CUSUM test in \cite{unnikrishnan2011minimax} may suffer from two drawbacks: 1) It is a likelihood ratio-based test and hence may not be amenable to implementation in high-dimensional models. 2) The notion of least favorable distribution is defined using {stochastic boundedness} (\cite{unnikrishnan2011minimax}), which may be difficult to verify for high-dimensional data. 

In contrast with the work in \cite{unnikrishnan2011minimax}, we define the notion of least favorable distribution using Fisher divergence and provide a method to identify the least favorable distributions.

We now summarize our contributions in this paper. 
\begin{enumerate}
\item We propose a new robust score-based quickest change detection (RSCUSUM) algorithm that can be applied to score-based models and unnormalized models. 
\resp{As with the SCUSUM method from \cite{wu_IT_2024}}, the role of Kullback-Leibler divergence in classical change detection is replaced with the Fisher divergence between the pre- and post-change distributions\resp{, and the role of log-likelihood ratios is replaced with differences between  Hyv\"arinen scores (\cite{hyvarinen2005estimation})}. This algorithm is introduced in Section~\ref{sec:RSCUSUM_algorithm}.

\item The RSCUSUM algorithm can address unknown pre- and post-change models. Specifically, assuming that the pre- and post-change laws belong to known families of distributions that are convex, we identify a least favorable pair of distributions that are nearest in terms of Fisher divergence. We then show that the RSCUSUM algorithm can consistently detect each post-change distribution from each pre-change distribution from the relevant families, and is robust in this sense. This analysis is provided in Section~\ref{sec:theoritical_analysis}.

\item 
We provide an effective method to identify the least favorable distributions among pre- and post-change families. This is in contrast to the setup in \cite{unnikrishnan2011minimax}, where a stochastic boundedness characterization makes it harder to identify the least favorable distributions. This result is presented in Section~\ref{sec:least_favorable_distribution}.


\item 
\resp{As with the SCUSUM algorithm, the RSCUSUM algorithm is a score-based algorithm where cumulative scores do not enjoy a standard martingale characterization.  This differentiates the RSCUSUM algorithm and the SCUSUM algorithm from the classical CUSUM algorithm, which leverages the fact that the likelihood ratios form a martingale under the pre-change model~\cite{lai1998information,woodroofe1982nonlinear}.  We generalize the analysis of the delay and false alarm of the SCUSUM method to the robust case.  This analysis is presented in Section~\ref{sec:theoritical_analysis}.}

\item We identify the least-favorable distribution for a specific case involving Gaussian mixture models.  We then demonstrate the effectiveness of the RSCUSUM algorithm through simulation studies on a Gaussian mixture model and a Gauss-Bernoulli Restricted Boltzmann Machine (\cite{gbrbm_without_tears}). The identification of the least-favorable distributions for this model class is presented in Section~\ref{sec:least_favorable_distribution} with simulation results presented in Section~\ref{sec:results}. 
\end{enumerate}

The remainder of this paper is organized as follows:  in Section~\ref{subsec:problem_formulation}, the problem of change-point detection is formally defined.  In Section~\ref{sec:soln_densities}, we review a detection algorithm in the case where the densities of the pre- and post-change distributions are known.  In Section~\ref{sec:soln_scores}, we review an algorithm that detects change points when we have knowledge of the scores of the pre- and post-change distributions without knowledge of their densities.

\section{Problem Formulation}

\label{subsec:problem_formulation}
\noindent Let $\{X_n\}_{n\geq 1}$ denote a sequence of independent random variables defined on the probability space $(\Omega, \mathcal{F}, \mu_\nu)$. Let $\mathcal{F}_n$ be the $\sigma-$algebra generated by random variables $X_1,\; X_2, \;\dots,\; X_n$, and let $\mathcal{F}=\sigma(\cup_{n\geq 1}\mathcal{F}_n)$ be the $\sigma-$algebra generated by the union of sub-$\sigma$-algebras $\{\mathcal{F}_n$\}. 
Under $\mu_\nu$, $X_1, \; X_2, \;\dots,\; X_{\nu-1}$ are independent and identically distributed (i.i.d.) according to probability measure $P_\infty$ (with density $p_\infty$) and $X_{\nu}, \; X_{\nu+1},\; \dots$ are {i.i.d.} according to probability measure $P_1$ (with density $p_1$). We think of $\nu$ as the change-point, $p_\infty$ as the pre-change density, and $p_1$ as the post-change density. We use $\mathbb{E}_{\nu}$ and $\text{Var}_{\nu}$ to denote the expectation and the variance associated with the
measure $\mu_\nu$, respectively. Thus, $\nu$ is seen as an unknown constant and we have an entire family $\{\mu_\nu\}_{1 \leq \nu \leq \infty}$ of change-point models, one for each possible change-point. We use $\mu_\infty$ to denote the measure under which there is no change, with $\mathbb{E}_\infty$ denoting the corresponding expectation.

A change detection algorithm is a stopping time $T$ with respect to the data stream $\{X_n\}_{n\geq 1}$:
$$
\{T \leq n\} \in \mathcal{F}_n, \quad \forall n \geq 1. 
$$
If $T\geq \nu$, we have made a {delayed detection}; otherwise, a {false alarm} has happened. 
Our goal is to find a stopping time $T$ to optimize the trade-off between well-defined metrics on delay and false alarm. 
We consider two minimax problem formulations to find the best stopping rule.

To measure the detection performance of a stopping rule, we use the following minimax metric (\cite{lorden1971procedures}), the worst-case averaged detection delay (WADD):
\begin{equation*}
\mathcal{L}_{\texttt{\textup{WADD}}}(T)\de \sup_{\nu\geq 1}\text{ess}\sup \mathbb{E}_{\nu}[(T-\nu+1)^{+}|\mathcal{F}_{\nu-1}],
\end{equation*}
where $(y)^{+}\de\max(y, 0)$ for any $y\in \mathbb{R}$. Here $\text{ess} \sup$ is the essential supremum, i.e., the supremum outside a set of measure zero. We also consider the version of minimax metric introduced in \cite{pollak1985optimal}, the worst conditional averaged detection delay (CADD):
\begin{equation*}
    \mathcal{L}_{\texttt{\textup{CADD}}}(T)\de \sup_{\nu\geq 1}\mathbb{E}_{\nu}[T-\nu|T\geq \nu].
\end{equation*}
For false alarms, we consider the {average running length} (ARL), which is defined as the mean time to false alarm:
$$
\text{ARL}\resp{(T)}\de \mathbb{E}_\infty[T].
$$

We assume that pre- and post-change distributions are not precisely known. However, each is known within an uncertainty class. Let $\mathcal{G}_\infty$ and $\mathcal{G}_1$ be two disjoint classes of probability measures. Then, we assume that the distributions $P_\infty$ and $P_1$ satisfy
\begin{equation}
\label{eq:UncertainityClasses_t}
    \begin{split}
        P_\infty &\in \mathcal{G}_\infty, \\
        P_1 &\in \mathcal{G}_1. 
    \end{split}
\end{equation}
The objective is to find a stopping rule to solve the following problem:
\begin{equation}
    \label{eq:lorden}
    \min_T \; \resp{\sup_{(P_\infty, P_1) \in \mathcal{G}_\infty \times \mathcal{G}_1}} \mathcal{L}_{\texttt{\textup{WADD}}}(T)\; 
    \; \text{ s.t.}\; \resp{\inf_{P_\infty \in \mathcal{G}_\infty }}\mathbb{E}_\infty[T]\geq \gamma,
\end{equation}
 where $\gamma$ is a constraint on the ARL. Precise assumptions on the families $\mathcal{G}_1$ and $\mathcal{G}_\infty$ will be discussed below. 
 
 The delay $\mathcal{L}_{\texttt{\textup{WADD}}}$ in the above problem is a function of the true pre- and post-change laws $P_\infty, P_1$ and should be designated as
 $
\mathcal{L}_{\texttt{\textup{WADD}}}^{P_\infty, P_1},
 $ 
 but we suppress the notation and refer to this quantity simply as $\mathcal{L}_{\texttt{\textup{WADD}}}$.
 Similarly, we will suppress the dependence of $\mathbb{E}_\infty[T]$ on the pre-change law $P_\infty$. 
 Thus, the goal in this problem is to find a stopping time $T$ to minimize the worst-case detection delay, subject to a constraint $\gamma$ on $\mathbb{E}_\infty[T]$. 
 
 We are also interested in the minimax formulation: 
\begin{equation}
    \label{eq:pollak}
    \min_T \;\resp{\sup_{(P_\infty, P_1) \in \mathcal{G}_\infty \times \mathcal{G}_1}} \mathcal{L}_{\texttt{\textup{CADD}}}(T) 
    \; \text{ s.t. }\; \resp{\inf_{P_\infty \in \mathcal{G}_\infty }}\mathbb{E}_\infty[T]\geq \gamma. 
\end{equation}
The delay $\mathcal{L}_\text{CADD}$ is also a function of $P_\infty, P_1$, but we suppress the notation and refer to $\mathcal{L}_\text{CADD}^{P_\infty, P_1}$ as $\mathcal{L}_\text{CADD}$. For an overview of the literature on quickest change detection when the densities are not precisely known, we refer the readers to \cite{veeravalli2014quickest,  tartakovsky2014sequential, basseville1993detection, lai1998information, lorden1971procedures}.

\section{Optimal Solution Based on Likelihood Ratios}
\label{sec:soln_densities}
If both the pre-change and post-change families are singletons, $\mathcal{G}_1 = \{P_1\}, \mathcal{G}_\infty = \{ P_\infty\}$, and $P_1 \neq P_\infty$, then the above formulations are the classical minimax formulations from the quickest change detection literature; see \cite{veeravalli2014quickest, poor2008quickest, tartakovsky2014sequential}. The optimal algorithm (exactly optimal for \eqref{eq:lorden} and asymptotically optimal for \eqref{eq:pollak}) is the CUSUM algorithm given by
\begin{equation} 
\label{eq:baseline}
    T_{\texttt{CUSUM}} \de \inf\{n\geq 1:\Lambda(n)\geq \omega\},
\end{equation}
where $\Lambda(n)$ is defined using the recursion
\begin{equation}
\begin{split}
\label{eq:cusum_score}
    &\Lambda(0)\de0, \nonumber \\
    &\Lambda(n) \de \biggr(\Lambda(n-1)+\log \frac{p_1(X_n)}{p_{\infty}(X_n)}\biggr)^{+}, \;\quad 
    \forall n \geq 1, 
    \end{split}
 \end{equation}
which leads to a computationally convenient stopping scheme. We recall that here $p_1$ is the post-change density and $p_\infty$ is the pre-change density. In \cite{lorden1971procedures} and \cite{lai1998information}, the asymptotic performance of the CUSUM algorithm is also characterized. Specifically, it is shown that with $\omega = \log \gamma$, 
$$
\mathbb{E}_\infty[T_{\texttt{CUSUM}}] \geq \gamma,
$$
and as $\gamma \rightarrow \infty$,
\begin{align*}
    \mathcal{L}_{\texttt{\textup{WADD}}}(T_{\texttt{CUSUM}}) \sim \mathcal{L}_{\texttt{\textup{CADD}}}(T_{\texttt{CUSUM}})\sim \frac{\log \gamma}{\mathbb{D}_{\texttt{\textup{KL}}}(P_{1}\|P_{\infty})}.
\end{align*}
Here $\mathbb{D}_{\texttt{\textup{KL}}}(P_{1}\|P_{\infty})$ is the Kullback-Leibler divergence between the post-change distribution and the pre-change distribution:
$$
\mathbb{D}_{\texttt{\textup{KL}}}(P_{1}\|P_{\infty}) \de \int p_1(x) \log \frac{p_1(x)}{p_\infty(x)} dx, 
$$
and the notation $g(c)\sim h(c)$ as $c\to c_0$ indicates that ${g(c)}/{h(c)} \to 1$ as $c\to c_0$ for any two functions $c\mapsto g(c)$ and $c\mapsto h(c)$.

If the densities $p_\infty, p_1$ are not known and are assumed to belong to families $\mathcal{G}_\infty, \mathcal{G}_1$, i.e., the families $\mathcal{G}_\infty, \mathcal{G}_1$ are not singleton sets, then the optimal test is designed using the least favorable distributions in a stochastic bounded sense. Specifically, in \cite{unnikrishnan2011minimax}, it is assumed that there are densities $\bar{p}_\infty \in \mathcal{G}_\infty, \bar{p}_1 \in \mathcal{G}_1$ such that for every $p_\infty \in \mathcal{G}_\infty$ and $p_1 \in \mathcal{G}_1$, 
\begin{equation}
\label{eq:leastfavunni}
    \begin{split}
        \log \frac{\bar{p}_1(X)}{\bar{p}_\infty(X)} \bigg|_{X \sim \bar{p}_1} \; \; \prec \quad \; \; \log \frac{\bar{p}_1(X)}{\bar{p}_\infty(X)}\bigg|_{X \sim p_1}  ,
    \end{split}
\end{equation}
\begin{equation}
\label{eq:leastfavunni_post}
    \begin{split}
        \log \frac{\bar{p}_1(X)}{\bar{p}_\infty(X)}\bigg|_{X \sim \bar{p}_\infty} 
        \; \; 
        \succ 
        \quad \; \; 
        \log \frac{\bar{p}_1(X)}{\bar{p}_\infty(X)} \bigg|_{X \sim p_\infty} 
        . 
    \end{split}
\end{equation}
Here the notation $\prec$ is used to denote stochastic dominance: if $W$ and $Y$ are two random variables, then $W \prec Y$ if
$$
P(Y \geq t) \geq P(W \geq t), \quad \text{for all } t \in (-\infty, \infty). 
$$
If such densities $\bar{p}_\infty, \bar{p}_1$ exist in the pre- and post-change families, then the optimal algorithm is the robust CUSUM algorithm designed with $\bar{p}_\infty, \bar{p}_1$ used as the pre- and post-change densities: 
\begin{equation*}
\label{eq:cusum_robust_LFD}
    \bar{\Lambda}(0)=0,  \;\quad\quad
    \bar{\Lambda}(n) \de \biggr(\bar{\Lambda}(n-1)+\log \frac{\bar{p}_1(X_n)}{\bar{p}_{\infty}(X_n)}\biggr)^{+}, \; \forall n \geq 1.
 \end{equation*}
The optimality in \cite{unnikrishnan2011minimax} is established under additional assumptions on the smoothness of densities. We refer the readers to \cite{unnikrishnan2011minimax} for a more precise optimality statement. 

We note that in the literature on quickest change detection, the issue of the unknown post-change model has been addressed by using a generalized likelihood ratio (GLR) test or a mixture-based test. While these tests have strong optimality properties, they are computationally expensive to implement; see \cite{lorden1971procedures, lai1998information, tartakovsky2014sequential}.

\section{Quickest Change Detection in Unnormalized and Score-Based Models}
\label{sec:soln_scores}
The limitation of the CUSUM and the robust CUSUM algorithms is that they are based on likelihood ratios which are not always precisely known. In modern machine learning applications, two new classes of models have emerged:
\begin{enumerate}
\item \textit{Unnormalized statistical models}: In these models, we know the distribution within a normalizing constant:
\begin{equation}
\begin{split}
    p_\infty(x) = \frac{\tilde{p}_\infty(x)}{Z_\infty},  \quad \text{ and } \quad 
    p_1(x) = \frac{\tilde{p}_1(x)}{Z_1},
\end{split}
\end{equation}
where $Z_\infty$ and $Z_1$ are normalizing constants that are hard (or even impossible) to calculate by numerical integration, especially when the dimension of $x$ is high. 
In some applications, the unnormalized models $\tilde{p}_\infty(x)$ and $\tilde{p}_1(x)$ are known in precise functional forms. Examples include continuous-valued Markov random fields or undirected graphical models which are used for image modeling. We refer the reader to \cite{hyvarinen2005estimation, wu_IT_2024} for detailed discussions on unnormalized models.   
 \item \textit{Score-based models}: Often, even $\tilde{p}_\infty(x)$ and $\tilde{p}_1(x)$ are unknown. But, it may be possible to learn the scores
	$$
	\nabla_x \log p_\infty(x), \quad \text{ and 
 }\quad \nabla_x \log p_1(x),
	$$
 from data. Here $\nabla_{x}$ is the gradient operator. This is possible using the idea of {score-matching}. Specifically, these scores can be learned using a deep neural network. We refer the readers to 
 \cite{hyvarinen2005estimation, song2020score, song2019score, vincent2011connection, wu_IT_2024, wu2022score} for details. We note that a score-based model is also unnormalized where the exact form of the unnormalized function is hard to estimate. 
\end{enumerate}

In \cite{wu_IT_2024}, a score-based CUSUM (SCUSUM) algorithm was developed to detect changes in unnormalized and score-based models and performance characteristics of this algorithm were obtained.  
In the score-based theory in \cite{wu_IT_2024}, the Kullback-Leibler divergence is replaced by the Fisher divergence  (defined precisely below) between the pre- and post-change densities. 

The SCUSUM algorithm is defined based on Hyv\"arinen Score (\cite{hyvarinen2005estimation}), which circumvents the computation of normalization constants and has found diverse applications including Bayesian model comparison~\cite{Shao2019}, information theory~\cite{ding2019gradient}, and hypothesis testing~\cite{wu2022score}.
We first define the Hyv\"arinen Score below.
\begin{definition}[Hyv\"arinen Score] The Hyv\"arinen score of any probability measure $P$ (with density $p$) is a mapping $(x, P)\mapsto \mathcal{S}_{\texttt{\textup{H}}}(x, P)$ given by 
    \begin{equation*}
        \mathcal{S}_{\texttt{\textup{H}}}(x, P) \de \frac{1}{2} \left \| \nabla_{x} \log p(x) \right \|_2^2 + \Delta_{x} \log p(x),
    \end{equation*}
whenever it can be well defined, where 
$$\Delta_{x} \de \sum_{i=1}^d \frac{\partial^2}{\partial x_i^2}$$
and $\nabla_x$ are the Laplacian and gradient operators acting on $X = (x_1, \cdots, x_d)^{\top}$, and where $\|\cdot\|_2$ denotes the Euclidean norm.  We assume that the Hyv\"arinen score is well defined for all $P \in \mathcal{G}_\infty \cup \mathcal{G}_1$ by Assumption~\ref{assumption:regularity}.
\end{definition}

By using the Hyv\"arinen Score in our algorithm, the role of Kullback-Leibler divergence in the theoretical analysis of the algorithm is replaced by the Fisher divergence. 

\begin{definition}[Fisher Divergence] The Fisher divergence between two probability measures $P$ to $Q$ (with densities $p$ and $q$) is defined by
\begin{equation}\label{eq:FisherDivDef}
    \mathbb{D}_{\texttt{\textup{F}}} (P \| Q) \de \mathbb{E}_{X\sim P} \left[ \rev{\frac{1}{2}} \left\| \nabla_{{x}} \log p(X)- \nabla_{{x}} \log q(X)\right \|_2^2 \right],
\end{equation}
whenever the integral is well defined. 
\end{definition}

Clearly, $\nabla_{{x}} \log p(x)$, $\nabla_{{x}} \log q(x)$, and $\Delta_{x} \log q(x)$ remain invariant if $p$ and $q$ are scaled by any positive constant with respect to $x$. Hence, the Fisher divergence and the Hyv\"arinen Score remain {scale-invariant} concerning an arbitrary constant scaling of density functions.

To design the SCUSUM algorithm, the precise knowledge of probability measures $P_\infty$ and $P_1$ is not required. It is enough if we know the densities in unnormalized form or know their scores. Specifically, if we can calculate the Hyv\"arinen scores $\mathcal{S}_{\texttt{\textup{H}}}(x, P_{\infty})$ and $\mathcal{S}_{\texttt{\textup{H}}}(x, P_{1})$, then we can define the SCUSUM algorithm as\footnote{In fact, the SCUSUM algorithm of \cite{wu_IT_2024} is scaled by a positive constant $\lambda$.  We discuss the unscaled SCUSUM algorithm and introduce similar scaling by $\rho$ in Section~\ref{sec:theoritical_analysis}.} 
\begin{equation} 
\label{eq:SCUSUM_rule_old}
    T_{\texttt{\textup{SCUSUM}}} \de \inf\{n\geq 1:Y(n)\geq \omega\},
\end{equation}
where $\omega>0$ is a stopping threshold that is pre-selected to control false alarms, and \resp{$Y(n)$} can be computed recursively:
\begin{align*}
    &Y(0) \de 0, \\
    &Y(n) \de (Y(n-1)+\mathcal{S}_{\texttt{\textup{H}}}(X_n, P_{\infty})-\mathcal{S}_{\texttt{\textup{H}}}(X_n, P_{1}))^{+},\;\forall n\geq 1.
\end{align*}

\resp{We now impose the following conditions on the classes $\mathcal{G}_\infty$ and $\mathcal{G}_1$ in order to make claims regarding the SCUSUM Algorithm. All lemmas and theorems in the paper, unless otherwise stated, will be assumed to be valid when all these conditions are satisfied. }
\begin{assumption} \label{assumption:disjointconvex}
\resp{$\mathcal{G}_\infty, \mathcal{G}_1$ are disjoint and are each convex.}
\end{assumption}

\begin{assumption} \label{assumption:regularity}
\resp{Following regularity conditions introduced in \cite{hyvarinen2005estimation} hold: 
\begin{enumerate}
    \item For all densities $g \in \mathcal{G}_\infty \cup \mathcal{G}_1$, $g$ is differentiable.
    \item For all $g \in \mathcal{G}_\infty \cup \mathcal{G}_1$, $\mathbb{E}_{x \sim g} [\| \nabla_x \log g(x)\|^2_2] < \infty$.
    \item For all $g \in \mathcal{G}_\infty \cup \mathcal{G}_1$ and for all $h \in \mathcal{G}_\infty \cup \mathcal{G}_1$, $g(x) \nabla_x \log h(x) \rightarrow 0$ as $\| x\| \rightarrow \infty$.
\end{enumerate}}
\end{assumption}

\begin{assumption} \label{assumption:support}
\resp{All $P \in \mathcal{G}_\infty \cup \mathcal{G}_1$ share the same support $\mathcal{X}$.}
\end{assumption}

\resp{Note that the regularity conditions contained in Assumption~\ref{assumption:regularity}  were assumed in \cite{hyvarinen2005estimation} to ensure that the Hyv\"arinen score is well-defined. We also assume that all the moment-generating functions appearing in the paper are finite. } The following theorem is established in \cite{wu_IT_2024} regarding the consistency and the performance of the SCUSUM algorithm.

\begin{theorem}[\cite{wu_IT_2024}]
\label{thm:SCUSUM_IT_2024}
The following statements are true for the SCUSUM algorithm: 

    \textit{(a) Consistency:} The SCUSUM algorithm is consistent. Specifically, the drift of the statistic is negative before the change and is positive after the change:
\begin{equation}
    \begin{split}
\mathbb{E}_{\infty}\left[\mathcal{S}_{\texttt{\textup{H}}}(X_1, P_{\infty})-\mathcal{S}_{\texttt{\textup{H}}}(X_1, P_{1})\right] \; &= \; -\mathbb{D}_{\texttt{\textup{F}}}(P_\infty \; \| \;  P_1)<0,\\
\mathbb{E}_{1}\left[\mathcal{S}_{\texttt{\textup{H}}}(X_1, P_{\infty})-\mathcal{S}_{\texttt{\textup{H}}}(X_1, P_{1})\right] \; &= \mathbb{D}_{\texttt{\textup{F}}}(P_1 \; \| \;  P_\infty)> 0.
    \end{split}
\end{equation}

\textit{(b) Average Run Length:} Let there exist a $\lambda > 0$ such that 
\begin{equation}
\label{eq:lambdastar_old}
\mathbb{E}_{\infty}[\exp(\lambda (\mathcal{S}_{\texttt{\textup{H}}}(X_1, P_{\infty})-\mathcal{S}_{\texttt{\textup{H}}}(X_1, P_{1})))] \leq 1.
\end{equation}
Then, for any $\omega>0$,
    \begin{equation}
    \label{eq:arl}
        \mathbb{E}_{\infty}[T_{\texttt{\textup{SCUSUM}}}]\geq  e^{\lambda \omega}.
    \end{equation}
    Thus, setting $\omega={(\log \gamma)}/{\lambda}$ in 
  \eqref{eq:SCUSUM_rule_old} implies 
    $$
    \mathbb{E}_{\infty}[T_{\texttt{\textup{SCUSUM}}}]\geq \gamma.
    $$
    Thus, similar to the CUSUM algorithm, even the SCUSUM algorithm enjoys a universal lower bound on the mean time to false alarm for any distribution pair $(P_\infty, P_1)$. A $\lambda$ satisfying  \eqref{eq:lambdastar_old} always exists, otherwise, the problem is trivial.

    \textit{(c) Expected Detection Delay:}  Finally, with $\omega={(\log \gamma)}/{\lambda}$, the delay performance is given by 
    \begin{equation} 
  \mathcal{L}_{\texttt{\textup{WADD}}}(T_{\texttt{\textup{SCUSUM}}}) \sim \frac{\log \gamma}{\lambda \mathbb{D}_{\texttt{\textup{F}}}(P_{1} \; \|\; P_\infty)}, \quad \text{ as } \gamma \to \infty. 
    \end{equation}
Thus, the expected detection delay depends inversely on the Fisher divergence between $P_1$ and $P_\infty$. Thus, the role of KL-divergence in the classical CUSUM algorithm is replaced by the Fisher divergence in the score-based CUSUM algorithm. 

\end{theorem}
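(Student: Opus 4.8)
The plan is to establish the three claims in the order stated, since each relies on the previous, and the whole theorem hinges on one integration-by-parts identity plus renewal-theoretic control of the reflected walk.

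\textbf{Part 1 (drift identity).} For the first part I would invoke Hyv\"arinen's integration-by-parts lemma, which is exactly what the regularity conditions of Assumption~\ref{assumption:regularity} are designed to support. The target is the identity
\begin{equation*}
\mathbb{E}_{X \sim P}\left[\mathcal{S}_{\texttt{H}}(X, Q) - \mathcal{S}_{\texttt{H}}(X, P)\right] = \mathbb{D}_{\texttt{F}}(P \| Q),
\end{equation*}
valid for any $P, Q \in \mathcal{G}_\infty \cup \mathcal{G}_1$. The key step is to rewrite $\mathbb{E}_{X\sim P}[\Delta_x \log q(X)]$ via integration by parts as $-\mathbb{E}_{X\sim P}[\langle \nabla_x \log p(X), \nabla_x \log q(X)\rangle]$; the boundary term vanishes by condition~3 of Assumption~\ref{assumption:regularity}, and condition~2 guarantees every expectation is finite. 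Applying this to both the $q$- and $p$-terms and completing the square collapses the left-hand side to $\mathbb{E}_{X\sim P}[\tfrac12 \|\nabla_x \log p - \nabla_x \log q\|_2^2] = \mathbb{D}_{\texttt{F}}(P\|Q)$. Specializing $(P,Q)=(P_\infty,P_1)$ under $\mathbb{E}_\infty$ and $(P,Q)=(P_1,P_\infty)$ under $\mathbb{E}_1$ yields the two displayed drifts, and strict positivity of the Fisher divergence follows from $P_\infty \neq P_1$ (disjointness, Assumption~\ref{assumption:disjointconvex}) together with the shared support of Assumption~\ref{assumption:support}.

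\textbf{Part 2 (ARL bound).} Write $Z_n \de \mathcal{S}_{\texttt{H}}(X_n, P_\infty) - \mathcal{S}_{\texttt{H}}(X_n, P_1)$ and $S_n \de \sum_{i=1}^n Z_i$. Condition~\eqref{eq:lambdastar_old} says precisely that $\mathbb{E}_\infty[e^{\lambda Z_1}] \leq 1$, so $e^{\lambda S_n}$ is a nonnegative supermartingale under $\mu_\infty$ (the increments are i.i.d.\ and independent of $\mathcal{F}_{n-1}$). I would first apply optional stopping at the one-sided first-passage time $\sigma \de \inf\{n : S_n \geq \omega\}$ to obtain the bound $\mathbb{P}_\infty(\sigma < \infty) \leq e^{-\lambda \omega}$. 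The second step uses the Lindley identity $Y(n) = S_n - \min_{0 \leq k \leq n} S_k$, which shows that $Y$ resets \emph{exactly} to $0$ whenever the running minimum is refreshed; hence the successive excursions of $Y$ away from $0$ are i.i.d., each of length at least one, and each crosses $\omega$ with probability at most $e^{-\lambda\omega}$ by the first step. A geometric-trials count then bounds the expected number of excursions before termination below by $e^{\lambda\omega}$, giving $\mathbb{E}_\infty[T_{\texttt{SCUSUM}}] \geq e^{\lambda\omega}$; taking $\omega = (\log\gamma)/\lambda$ yields $\mathbb{E}_\infty[T_{\texttt{SCUSUM}}] \geq \gamma$.

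\textbf{Part 3 (delay asymptotics).} For the delay I would first reduce the worst case to $\nu=1$ with $Y(0)=0$: since the map $y \mapsto (y+z)^+$ is nondecreasing, coupling two recursions with common increments shows that a larger $Y(\nu-1)$ can only speed detection, so the essential supremum over $\mathcal{F}_{\nu-1}$ is attained at $Y(\nu-1)=0$, where the post-change evolution begins afresh. It then suffices to analyze the first passage of $Y$ to $\omega$ under $\mathbb{E}_1$. By Part~1 the post-change increments have positive mean $\mathbb{D}_{\texttt{F}}(P_1\|P_\infty)$, so once the walk leaves a neighborhood of $0$ the reflection is inactive and $S_n$ is an i.i.d.\ walk with this drift. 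The strong law together with Wald's identity (and uniform integrability of the overshoot, controlled through the finite moment generating functions assumed in the paper) gives $\mathbb{E}_1[\text{first passage to }\omega] \sim \omega / \mathbb{D}_{\texttt{F}}(P_1\|P_\infty)$ as $\omega \to \infty$; substituting $\omega=(\log\gamma)/\lambda$ delivers $\mathcal{L}_{\texttt{WADD}}(T_{\texttt{SCUSUM}}) \sim (\log\gamma)/(\lambda\, \mathbb{D}_{\texttt{F}}(P_1\|P_\infty))$.

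\textbf{Main obstacle.} The delicate part is Part~3. Unlike classical CUSUM, the increments $Z_n$ are score differences rather than log-likelihood ratios, so there is no post-change martingale or change-of-measure identity to exploit and one cannot directly cite the Lorden/Lai asymptotics; the argument must be carried out by hand via renewal theory on the i.i.d.\ walk $S_n$. The crux is showing that the overshoot past $\omega$ is asymptotically negligible relative to $\omega$, which is exactly where the finiteness of the moment generating functions enters. The reduction to the worst-case initialization $Y(\nu-1)=0$ likewise requires the monotonicity-in-initial-condition property of the reflected recursion, which should be verified rather than taken for granted.
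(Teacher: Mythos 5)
Your proposal is correct and follows essentially the same route as the paper: the paper states this theorem as a citation of \cite{wu_IT_2024}, but its appendix proofs of the robust analogues (Lemma~\ref{lemma:drifts}, Theorem~\ref{thm:arl}, Theorem~\ref{thm:cond_edd}) use exactly your strategy — the Hyv\"arinen integration-by-parts identity with cancellation of the $\mathbb{E}_P[\tfrac12\|\nabla\log p\|^2]$ term for the drifts, an exponential (super)martingale plus an excursion/renewal decomposition with Doob's maximal inequality and a geometric-trials count for the ARL bound, and Wald's identity with Lorden's overshoot bound applied to the unreflected walk for the delay. The only cosmetic differences are that the paper tilts by $\delta=-\log\mathbb{E}_\infty[e^{\lambda Z_1}]$ to work with an exact martingale rather than a supermartingale, and controls the overshoot via Lorden's second-moment theorem rather than moment generating functions.
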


\section{Robust Quickest Change Detection in Unnormalized and Score-Based Models} \label{sec:RSCUSUM_algorithm}

In most applications, even the scores are not precisely known. For example, not enough training data may be available for precise score-matching. 
Thus, the SCUSUM algorithm cannot be applied. In this paper, we take a robust approach to address this issue. Specifically, we assume that both $\mathcal{G}_\infty$ and $\mathcal{G}_1$ are families of unnormalized or score-based models. In addition, we assume that there exists a pair of least favorable distributions $Q_\infty, Q_1$ in the following sense:

\begin{definition}[Least-favorable distributions (LFD)]
\label{def:lfd}
We say that a pair of distributions $(Q_\infty, Q_1)$ are least favorable if they are a solution to the following optimization problem: 
\begin{equation}
\label{eq:lfd}
\resp{(Q_\infty, Q_1) \de \argmin_{(R_\infty, R_1) \in \mathcal{G}_\infty \times \mathcal{G}_1} \mathbb{D}_{\texttt{\textup{F}}}(R_1 \| R_\infty).}
\end{equation}
\end{definition}
Thus, the distributions $Q_\infty \in \mathcal{G}_\infty, Q_1 \in \mathcal{G}_1$ are closest as measured through the pseudo distance of Fisher divergence (see \eqref{eq:FisherDivDef}). 
We remark that since the observations are assumed to be high-dimensional, the densities are not precisely known. As a consequence, the stochastic boundedness condition of \cite{unnikrishnan2011minimax} or the KL divergence cannot be used here to define the least favorable pair. On the other hand, the Fisher divergence can be computed for unnormalized and score-based models. 
In the rest of the paper, we assume that we can always find the least favorable pair $(Q_\infty, Q_1)$. In Section~\ref{sec:least_favorable_distribution}, we provide several examples where this assumption is valid. Finally, if multiple pairs are satisfying \eqref{eq:lfd}, then we assume that any one pair has been chosen arbitrarily. 

\resp{The motivation for using \eqref{eq:lfd} comes from the following analysis. Note that for any stopping rule $T$ such that 
$\inf_{P_\infty \in \mathcal{G}_\infty }\mathbb{E}_\infty[T]\geq \gamma$, we have as $\gamma \to \infty$, 
\begin{align*}
    \label{eq:lorden_tb}
    \min_T \; \sup_{(P_\infty, P_1) \in \mathcal{G}_\infty \times \mathcal{G}_1}\mathcal{L}_{\texttt{\textup{WADD}}}(T) 
    &  \geq \sup_{(P_\infty, P_1) \in \mathcal{G}_\infty \times \mathcal{G}_1} \min_T \;\mathcal{L}_{\texttt{\textup{WADD}}}(T) \\
    &  \geq \frac{\log \gamma}{\inf_{(P_\infty, P_1) \in \mathcal{G}_\infty \times \mathcal{G}_1}\mathbb{D}_{\texttt{\textup{KL}}}(P_1 \| P_\infty)}(1+o(1)), \eqnum
\end{align*}
where the $o(1)$ terms goes to zero as $\gamma \to \infty$. Thus,
$$
\frac{\log \gamma}{\inf_{(P_\infty, P_1) \in \mathcal{G}_\infty \times \mathcal{G}_1}\mathbb{D}_{\texttt{\textup{KL}}}(P_1 \| P_\infty)}
$$
provides a lower bound on the asymptotic performance of any algorithm. This lower bound can be achieved by a CUSUM-type algorithm $T^*$ if $\inf_{P_\infty \in \mathcal{G}_\infty }\mathbb{E}_\infty[T^*]\geq \gamma$ and the drift of this algorithm after the change is 
$\inf_{(P_\infty, P_1) \in \mathcal{G}_\infty \times \mathcal{G}_1}\mathbb{D}_{\texttt{\textup{KL}}}(P_1 \| P_\infty)$. Since we cannot calculate the KL divergence for unnormalized and score-based models, we use the Fisher divergence as a proxy measure. We note also that these two divergences coincide for certain Gaussian families \cite{wu_IT_2024}. 
}

We now use $Q_\infty, Q_1$ and their densities $q_\infty, q_1$ to design a robust score-based cumulative sum (RSCUSUM) algorithm. 
We define the instantaneous RSCUSUM score function $x\mapsto z(x)$ by 
\begin{equation}
\label{eq:scusum_instantaneous}
    z(x) \de \mathcal{S}_{\texttt{\textup{H}}}(x, Q_{\infty})-\mathcal{S}_{\texttt{\textup{H}}}(x, Q_{1}),
\end{equation}
where $\mathcal{S}_{\texttt{\textup{H}}}(x, Q_{\infty})$ and $\mathcal{S}_{\texttt{\textup{H}}}(x, Q_1)$ are respectively the Hyv\"arinen score functions of $Q_\infty$ and $Q_1$. 
Our proposed stopping rule is given by 
\begin{equation} 
\label{eq:SCUSUM_rule}
    T_{\texttt{\textup{RSCUSUM}}} \de \inf\{n\geq 1:Z(n)\geq \omega\},
\end{equation}
where $\omega>0$ is a stopping threshold that is pre-selected to control false alarms, and $Z(n)$ can be computed recursively:
\begin{align*}
    &Z(0) \de 0, \;\quad\quad Z(n) \de (Z(n-1)+z(X_n))^{+},\;\forall n\geq 1.
\end{align*}
The statistic $Z(n)$ is referred to as the detection score of RSCUSUM at time $n$. The RSCUSUM algorithm is summarized in Algorithm~\ref{algm:rscusum}. The purpose of the rest of the paper is to establish conditions under which the RSCUSUM algorithm is consistent, and also provide its delay and false alarm analysis. 

\begin{algorithm}
\DontPrintSemicolon
\caption{RSCUSUM Quickest Change Detection Algorithm}
\label{algm:rscusum}
\KwInput{Hyv\"arinen score functions 
$\mathcal{S}_{\texttt{\textup{H}}}(\cdot, Q_{\infty})$ and $\mathcal{S}_{\texttt{\textup{H}}}(\cdot, Q_{1})$ of least favorable distributions in 
$\mathcal{G}_\infty, \mathcal{G}_1$, respectively.} 
\SetKwProg{Fn}{Initialization}{:}{}
  \Fn{}{
       Current time $k=0$,  $\omega>0$, and $Z(0)=0$}
\While{$Z(k)<\omega$}{
$k = k+1$\\
Update $z(X_k) = \mathcal{S}_{\texttt{\textup{H}}}(X_{k}, Q_{\infty})-\mathcal{S}_{\texttt{\textup{H}}}(X_{k}, Q_{1})$\\
Update $Z(k) = \max(Z(k-1)+z(X_k), 0)$\;
}
Record the current time $k$ as the stopping time $T_{\texttt{\textup{RSCUSUM}}}$\;
\KwOutput{$T_{\texttt{\textup{RSCUSUM}}}$}
\end{algorithm}

\section{Consistency and Delay and False Alarm Analysis of the RSCUSUM Algorithm} \label{sec:theoritical_analysis}
\noindent In this section, we prove the consistency (the ability to detect the change with probability one) and provide delay and false alarm analysis of the RSCUSUM algorithm. In Section~\ref{sec:RevTriangle}, we prove an important lemma that can be interpreted as a reverse triangle inequality for Fisher divergence. This lemma is then used in Section~\ref{sec:consistency} to prove the consistency of the RSCUSUM algorithm. In Section~\ref{sec:MFA}, we obtain a lower bound on the mean time to a false alarm, and in Section~\ref{sec:DelayAnalysis}, we obtain an expression for the average detection delay of the RSCUSUM algorithm. 

We first make another fundamental assumption: 
\begin{assumption} \label{assumption:nearness}
For the least favorable distributions $Q_\infty, Q_1$ of Definition~\ref{def:lfd}, $\mathbb{D}_{\texttt{\textup{F}}}(P_\infty \| Q_\infty) < \mathbb{D}_{\texttt{\textup{F}}}(P_\infty \| Q_1)$.
\end{assumption}
\begin{remark}
If $\mathcal{G}_\infty$ contains only one element ($\mathcal{G}_\infty = \{P_\infty \}$), then clearly the least favorable pre-change distribution $Q_\infty$ is equal to $P_\infty$ and $\mathbb{D}_{\texttt{\textup{F}}}(P_\infty \| Q_\infty) = 0$.  Then, Assumption~\ref{assumption:nearness} follows from the disjointness of $\mathcal{G}_\infty, \mathcal{G}_1$ given in Assumption~\ref{assumption:disjointconvex} and the regularity conditions given in Assumption~\ref{assumption:regularity}.
\end{remark}
Next, we provide a method of checking Assumption~\ref{assumption:nearness} for a specific construction of $\mathcal{G}_\infty$.
\begin{theorem}
\label{theorem:applicability}
Suppose we have a finite set of distributions:
\begin{align*}
    \rev{\mathcal{P}_m = \{P_i,\; i=1,\dots, m\},\; m\in \mathbb{N}^{+}}
\end{align*}
\resp{where each distribution $P_i$ has density $p_i$.  Further} suppose that $\mathcal{G}_\infty$ is defined to be the convex hull of this finite set:
\begin{equation}
    \mathcal{G}_\infty \de \biggl\{ x \mapsto \sum_{i=1}^m \alpha_i p_i(x):  \sum_{i=1}^m \alpha_i=1, \alpha_i \geq 0\biggr\}.
\end{equation}

If $\mathbb{D}_{\texttt{\textup{F}}}(P_i \| Q_\infty) < \mathbb{D}_{\texttt{\textup{F}}}(P_i \| Q_1)$ for all $1 \leq i \leq m$, then Assumption~\ref{assumption:nearness} holds for any $P_\infty \in \mathcal{G}_\infty$.
\end{theorem}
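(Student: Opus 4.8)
The plan is to reduce Assumption~\ref{assumption:nearness} for a general mixture $P_\infty$ to the per-vertex hypotheses $\mathbb{D}_F(P_i \| Q_\infty) < \mathbb{D}_F(P_i \| Q_1)$ by exploiting a \emph{linearity} property of the relevant difference of Fisher divergences. The first and central step is to establish, for any distribution $R$ (with density $r$) in $\mathcal{G}_\infty \cup \mathcal{G}_1$, the identity
\begin{equation}
\label{eq:plan_linid}
\mathbb{D}_F(R \| Q_\infty) - \mathbb{D}_F(R \| Q_1) = \mathbb{E}_{X \sim R}[z(X)] = \int r(x)\, z(x)\, dx,
\end{equation}
where $z(x) = \mathcal{S}_{\texttt{H}}(x, Q_\infty) - \mathcal{S}_{\texttt{H}}(x, Q_1)$ is the fixed instantaneous RSCUSUM score of \eqref{eq:scusum_instantaneous}. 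This is precisely the computation underlying Theorem~\ref{thm:SCUSUM_IT_2024}, part~1, but carried out with the expectation taken under an arbitrary $R$ rather than under one of the score-generating laws. Expanding each Hyv\"arinen score and applying the integration-by-parts step of \cite{hyvarinen2005estimation} (valid here by the vanishing-boundary condition in Assumption~\ref{assumption:regularity}), one finds $\mathbb{D}_F(R\|Q) = \mathbb{E}_{X\sim R}[\tfrac12\|\nabla_x \log r(X)\|_2^2] + \mathbb{E}_{X\sim R}[\mathcal{S}_{\texttt{H}}(X,Q)]$, so the $r$-dependent self-term appears identically in $\mathbb{D}_F(R\|Q_\infty)$ and $\mathbb{D}_F(R\|Q_1)$ and cancels in the difference, leaving \eqref{eq:plan_linid}.

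The crucial point revealed by \eqref{eq:plan_linid} is that, although $\mathbb{D}_F(R \| Q)$ is \emph{not} linear in $R$ (both the integrand $\|\nabla_x \log r - \nabla_x \log q\|_2^2$ and the weighting measure depend on $r$), the \emph{difference} $\mathbb{D}_F(R\|Q_\infty) - \mathbb{D}_F(R\|Q_1)$ is the integral of the \emph{fixed} function $z$ against $r$, and hence is linear in $r$. I expect this to be the main conceptual obstacle: one must recognize that the self-term cancellation makes linearity available only for the difference, not for either Fisher divergence on its own, and that $z$ is determined solely by $Q_\infty, Q_1$ and does not vary with the mixture being tested.

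Given \eqref{eq:plan_linid}, the remainder is routine. Writing $p_\infty = \sum_{i=1}^m \alpha_i p_i$ with $\alpha_i \geq 0$ and $\sum_i \alpha_i = 1$, linearity of the integral gives
\begin{equation}
\mathbb{D}_F(P_\infty \| Q_\infty) - \mathbb{D}_F(P_\infty \| Q_1) = \int \Big(\sum_{i=1}^m \alpha_i p_i(x)\Big) z(x)\, dx = \sum_{i=1}^m \alpha_i \big[\mathbb{D}_F(P_i \| Q_\infty) - \mathbb{D}_F(P_i \| Q_1)\big].
\end{equation}
By hypothesis every bracketed term is strictly negative, and since the $\alpha_i$ are nonnegative and sum to one (so at least one is strictly positive), the convex combination is strictly negative. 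Hence $\mathbb{D}_F(P_\infty \| Q_\infty) < \mathbb{D}_F(P_\infty \| Q_1)$, which is exactly Assumption~\ref{assumption:nearness}. Finally, I would observe that the integration-by-parts step justifying \eqref{eq:plan_linid} applies to each pair $(P_i, Q_\infty)$, $(P_i, Q_1)$, $(P_\infty, Q_\infty)$, and $(P_\infty, Q_1)$ appearing above, since every vertex $P_i$ lies in $\mathcal{G}_\infty$ (taking $\alpha_i = 1$), $P_\infty \in \mathcal{G}_\infty$, and $Q_\infty \in \mathcal{G}_\infty$, $Q_1 \in \mathcal{G}_1$, so all densities involved belong to $\mathcal{G}_\infty \cup \mathcal{G}_1$ and Assumption~\ref{assumption:regularity} covers every divergence in the argument.
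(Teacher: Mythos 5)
Your proposal is correct and follows essentially the same route as the paper: both establish that $\mathbb{D}_F(R\|Q_\infty)-\mathbb{D}_F(R\|Q_1)=\mathbb{E}_{X\sim R}[z(X)]$ via cancellation of the self-term $\mathbb{E}_{X\sim R}[\tfrac12\|\nabla_x\log r(X)\|_2^2]$, then exploit linearity of $r\mapsto\int r\,z$ to write the difference for the mixture $P_\infty=\sum_i\alpha_i P_i$ as the convex combination $\sum_i\alpha_i[\mathbb{D}_F(P_i\|Q_\infty)-\mathbb{D}_F(P_i\|Q_1)]$ of strictly negative terms. Your added remarks on why linearity holds only for the difference, and on checking the integration-by-parts hypotheses for every pair involved, are correct refinements of the same argument.
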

\begin{proof}
We use $C_R$ to denote the term $\mathbb{E}_{X\sim R} [\frac{1}{2}\left \| \nabla_{x} \log r(X) \right \|_2^2]$ for any distribution $R$. Then 
\begin{align*}
\begin{split}
     \mathbb{E}_{X \sim R}[z(X)] = \mathbb{E}_{X \sim R}[\mathcal{S}_{\texttt{\textup{H}}}(X, Q_{\infty})-\mathcal{S}_{\texttt{\textup{H}}}(X, Q_1)] 
     &=\mathbb{D}_{\texttt{\textup{F}}} (R \| Q_{\infty}) -C_{R} -\mathbb{D}_{\texttt{\textup{F}}} (R \| Q_1)+C_{R} \\
     &=\mathbb{D}_{\texttt{\textup{F}}} (R \| Q_\infty)-\mathbb{D}_{\texttt{\textup{F}}} (R \| Q_1),
     \end{split}
\end{align*}
where $z(x)$ is defined in Equation~\ref{eq:scusum_instantaneous}.  Note that $z(x)$ does not depend upon $P_\infty$ or $P_1$.

By the construction of $\mathcal{G}_\infty$, we can express $P_\infty = \alpha_1 P_1 + ...+ \alpha_m P_m$ for some (possibly unknown) $\alpha_1, ..., \alpha_m$:
\begin{align*}
    \mathbb{D}_{\texttt{\textup{F}}}(P_\infty  \| Q_\infty) - \mathbb{D}_{\texttt{\textup{F}}}(P_\infty \| Q_1) = &\mathbb{E}_{\infty}[z(X_1)]
    = \int p_\infty(x) z(x) dx = \int \sum_{i=1}^m \alpha_i p_i(x) z(x) dx \\
    & = \sum_{i=1}^m \alpha_i \int p_i(x) z(x) dx  = \sum_{i=1}^m \alpha_i \mathbb{E}_{P_i}[z(X)] \\
    &= \sum_{i=1}^m \alpha_i \bigg(\mathbb{D}_{\texttt{\textup{F}}}(P_i \| Q_\infty) - \mathbb{D}_{\texttt{\textup{F}}}(P_i \| Q_1)\bigg). \eqnum
\end{align*}
It is given that $\mathbb{D}_{\texttt{\textup{F}}}(P_i \| Q_\infty) - \mathbb{D}_{\texttt{\textup{F}}}(P_i \| Q_1) < 0$ for each $1 \leq i \leq m$ and that $\alpha_i \geq 0$ for all $1 \leq i \leq m$.  Thus, $\mathbb{D}_{\texttt{\textup{F}}}(P_\infty \| Q_\infty) - \mathbb{D}_{\texttt{\textup{F}}}(P_\infty \| Q_1)$ can be written as the sum of $m$ nonpositive terms, at least one of which is strictly negative.
\end{proof}

\subsection{Reverse Triangle Inequality for Fisher Divergence}\label{sec:RevTriangle}
We first prove an important lemma for our problem. Suppose the Fisher divergence is seen as a measure of distance between two probability measures. In that case, the following lemma provides a reverse triangle inequality for this distance, under the mild assumption that the order of integrals and derivatives can be interchanged.

\begin{lemma}
\label{lemma:rti}
Let $Q_{\infty} \in \mathcal{G}_\infty, Q_1\in \mathcal{G}_1$ be the least-favorable distributions (as defined in Definition~\ref{def:lfd}), and let $R_1 \in \mathcal{G}_1$ be any other distribution in the post-change family. Then 
\begin{equation*}
\mathbb{D}_{\texttt{\textup{F}}}\left(Q_1 \| Q_{\infty}\right)\leq \mathbb{D}_{\texttt{\textup{F}}}\left(R_1 \| Q_{\infty}\right) - \mathbb{D}_{\texttt{\textup{F}}}\left(R_1 \| Q_1\right).
\end{equation*}
\end{lemma}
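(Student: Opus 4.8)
The plan is to reduce the claimed inequality to a first-order optimality condition for the least-favorable post-change distribution. The starting point is the score-matching identity already used implicitly in the proof of Theorem~\ref{theorem:applicability}: integrating by parts under the regularity conditions of Assumption~\ref{assumption:regularity} gives, for any distribution $R$ and any fixed $Q$, $\mathbb{E}_{X\sim R}[\mathcal{S}_{\texttt{H}}(X,Q)] = \mathbb{D}_{\texttt{F}}(R\|Q) - C_R$, where $C_R \de \mathbb{E}_{X\sim R}[\tfrac12\|\nabla_x\log r(X)\|_2^2]$. Applying this to $z(x) = \mathcal{S}_{\texttt{H}}(x,Q_\infty)-\mathcal{S}_{\texttt{H}}(x,Q_1)$ makes the $C_R$ terms cancel, so that $\mathbb{E}_{X\sim R}[z(X)] = \mathbb{D}_{\texttt{F}}(R\|Q_\infty)-\mathbb{D}_{\texttt{F}}(R\|Q_1)$ for every $R$. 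In particular the right-hand side of the lemma is exactly $\mathbb{E}_{R_1}[z]$, while the left-hand side equals $\mathbb{D}_{\texttt{F}}(Q_1\|Q_\infty) = \mathbb{E}_{Q_1}[z]$, using $\mathbb{D}_{\texttt{F}}(Q_1\|Q_1)=0$. Hence the lemma is equivalent to the single inequality $\mathbb{E}_{Q_1}[z]\le \mathbb{E}_{R_1}[z]$ for every $R_1\in\mathcal{G}_1$.

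To establish this, I would exploit that $Q_1$ is a minimizer of $F(R)\de\mathbb{D}_{\texttt{F}}(R\|Q_\infty)$ over $\mathcal{G}_1$: this follows from Definition~\ref{def:lfd}, since if the second component could be decreased while the first component is held fixed at $Q_\infty$, the joint Fisher divergence would drop, contradicting minimality of the pair. Because $\mathcal{G}_1$ is convex (Assumption~\ref{assumption:disjointconvex}), for any $R_1\in\mathcal{G}_1$ the path $P_t \de (1-t)Q_1 + t R_1$ stays in $\mathcal{G}_1$ for $t\in[0,1]$, so $g(t)\de F(P_t)$ is minimized at $t=0$. Consequently the one-sided derivative satisfies $g'(0^+)\ge 0$, since the difference quotients $\tfrac{g(t)-g(0)}{t}$ are nonnegative and convexity of $F$ in the density (which follows from joint convexity of the perspective map $(a,b)\mapsto\|b\|_2^2/a$) guarantees the limit is well defined.

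The remaining and main task is to evaluate $g'(0^+)$ and show it equals $\mathbb{E}_{R_1}[z]-\mathbb{E}_{Q_1}[z]$. Writing $F(P_t)=\mathbb{E}_{P_t}[\mathcal{S}_{\texttt{H}}(\cdot,Q_\infty)]+C_{P_t}$, the first summand is affine in $t$ and differentiates immediately; the work is concentrated in the nonlinear term $C_{P_t}=\tfrac12\int \|\nabla_x p_t\|_2^2/p_t\,dx$. Differentiating under the integral sign (permitted by the interchange assumption stated just before the lemma) with $\dot p_t = r_1 - q_1$, and then integrating by parts via Assumption~\ref{assumption:regularity} to convert gradients of $q_1$ into the Laplacian $\Delta_x\log q_1$, collapses the expression into Hyv\"arinen-score form and yields $g'(0^+)=\mathbb{E}_{R_1}[\mathcal{S}_{\texttt{H}}(\cdot,Q_\infty)-\mathcal{S}_{\texttt{H}}(\cdot,Q_1)]-\mathbb{D}_{\texttt{F}}(Q_1\|Q_\infty)=\mathbb{E}_{R_1}[z]-\mathbb{E}_{Q_1}[z]$. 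Combining this with $g'(0^+)\ge0$ gives $\mathbb{E}_{Q_1}[z]\le\mathbb{E}_{R_1}[z]$, which by the first paragraph is precisely the assertion of the lemma. I expect the bookkeeping in this derivative computation — in particular, tracking the $C_{Q_1}$ contributions through the integration by parts so that they cancel cleanly against the evaluation of $\mathbb{E}_{Q_1}[\mathcal{S}_{\texttt{H}}(\cdot,Q_\infty)]$ — to be the delicate step.
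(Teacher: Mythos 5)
Your proposal is correct and follows essentially the same route as the paper: both arguments observe that $Q_1$ minimizes $\mathbb{D}_{\texttt{F}}(\cdot\,\|\,Q_\infty)$ over the convex set $\mathcal{G}_1$, move along the segment joining $Q_1$ to $R_1$, and read the lemma off the sign of the one-sided directional derivative at the minimizer (your $g'(0^+)\ge 0$ is the paper's $\mathcal{L}'(1^-)\le 0$ under the reparametrization $t=1-\xi$). The only difference is bookkeeping: you evaluate the derivative via the score-matching identity $\mathbb{D}_{\texttt{F}}(R\|Q)=\mathbb{E}_{R}[\mathcal{S}_{\texttt{H}}(\cdot,Q)]+C_R$ and one integration by parts, while the paper expands the squared-norm integrand directly; both yield the same identity $g'(0^+)=\mathbb{D}_{\texttt{F}}(R_1\|Q_\infty)-\mathbb{D}_{\texttt{F}}(R_1\|Q_1)-\mathbb{D}_{\texttt{F}}(Q_1\|Q_\infty)$.
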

\begin{proof}
Consider a convex set of densities \begin{align*}
\bigl\{x\mapsto q_{\xi}(x): q_{\xi}(x)=\xi q_1(x)+(1-\xi) r_1(x), \xi \in [0,1]\bigr\},
\end{align*}
where $q_1$ and $r_1$ are densities of $Q_1$ and $R_1$, respectively. 
Let $Q_{\xi}$ denote the distribution characterized by density $q_{\xi}$. 
We note that $Q_{\xi} \in \mathcal{G}_1$ due to the convexity assumption on $\mathcal{G}_1$. We use $\mathcal{L}(\xi)$ to denote the Fisher divergence $\mathbb{D}_{\texttt{\textup{F}}} \left(Q_{\xi}\| Q_{\infty}\right)$:
\begin{align*}
    \mathcal{L}(\xi)&=\int \rev{\frac{1}{2}} \big\|\nabla \log q_{\xi}-\nabla\log q_{\infty}\big\|^2 q_{\xi} dx\\
    &=\int \rev{\frac{1}{2}} \big\| \nabla \log \bigl(\xi q_1+(1-\xi)r_1\bigr)-\nabla \log q_{\infty} \big\|^2  \bigl(\xi q_1+(1-\xi)r_1\bigr) dx.
\end{align*}
Clearly $\mathcal{L}(\xi)$ is minimized at $\xi=1$, and ${\partial \mathcal{L}(\xi)}/{\partial\xi}\mid_{\xi=1^-}\le 0$. 
Letting $\mathcal{L}^{\prime}(\xi)={\partial \mathcal{L}(\xi)}/{\partial\xi}$, we have 
\begin{equation*}
\mathcal{L}^{\prime}(\xi)=\int \rev{\frac{1}{2}} \bigl(q_1-r_1\bigr)\big\|\nabla \log q_{\xi}-\nabla \log q_{\infty}\big\|^2 d x
\\
+\int  q_{\xi} \nabla\left( \frac{q_1-r_1}{q_{\xi}} \right)^T \bigl(\nabla\log q_{\xi} -\nabla \log q_{\infty}\bigr)dx.
\end{equation*}
This implies 
\begin{align*}
\label{eq:diff}
\mathcal{L}^{\prime}(1^{-}) &= \int \rev{\frac{1}{2}} \bigl(q_1-r_1\bigr)\big\|\nabla \log q_1- \nabla \log q_{\infty}\big\|^2 dx\notag 
 \quad +\int q_1 \nabla\left(\frac{q_1-r_1}{q_1}\right)^T\bigl(\nabla \log q_1-\nabla \log q_{\infty}\bigr) dx\notag\\
& = \mathbb{D}_{\texttt{\textup{F}}}\left(Q_1 \| Q_{\infty}\right)-\int\underbrace{\rev{\frac{1}{2}}  r_1\big\|\nabla\log q_1-\nabla \log q_{\infty}\big\|^2}_{\text{term 1}}   +\underbrace{ q_1 \nabla\left(\frac{q_1-r_1}{q_1}\right)^T \bigl(\nabla \log q_1-\nabla\log q_{\infty}\bigr)}_{\text{term 2}}dx. \eqnum
\end{align*}
For term 1, we have 
\begin{align*}
\label{eq:term1}
\frac{1}{2}  r_1\big\| \nabla \log q_1-\nabla\log q_{\infty}\big\|^2  &=\frac{1}{2} r_1\big\|\nabla\log q_1-\nabla\log r_1\big \|^2 
+\frac{1}{2} r_1\big\|\nabla\log r_1-\nabla\log q_{\infty}\big\|^2  \\
&\quad   +\underbrace{r_1\bigl(\nabla \log q_1-\nabla\log r_1\bigr)^T \bigl(\nabla \log r_1-\nabla\log q_{\infty}\bigr)}_{\text{term 1(a)}}. \eqnum
\end{align*}
We note that,
\begin{align}
\int \rev{\frac{1}{2}} r_1
\big\|\nabla\log q_1-\nabla\log r_1\big\|^2 dx &= \mathbb{D}_{\textit{F}}(R_1\|Q_1), \label{eq: fisher1}\\
\int \rev{\frac{1}{2}} r_1
\big\|\nabla\log r_1-\nabla\log q_{\infty}\big\|^2 dx &= \mathbb{D}_{\textit{F}}(R_1\|Q_{\infty}). \label{eq: fisher2}
\end{align}
For term 2, we note that 
\begin{align*}
\nabla\left( \frac{q_1-r_1}{q_1}\right)
=\frac{r_1}{q_1}\bigl(\nabla\log q_1-\nabla\log r_1\bigr).
\end{align*}
Therefore, 
\begin{equation}
q_1 \nabla\left(\frac{q_1-r_1}{q_1}\right)^T \bigl(\nabla \log q_1-\nabla\log q_{\infty}\bigr) \\
=r_1\bigl(\nabla \log q_1-\nabla\log r_1\bigr)^T \bigl(\nabla \log q_1-\nabla\log q_{\infty}\bigr).
\label{eq:term2}
\end{equation}
Combining the last term in Equation (\ref{eq:term1}) with Equation (\ref{eq:term2}),
\begin{align*}
\label{eq: comb_term12}
-\text{term 1(a)} + \text{term 2} 
&=r_1\bigl(\nabla \log q_1-\nabla\log r_1\bigr)^T  
     \bigl(\nabla \log q_1-\nabla\log q_{\infty} -\nabla\log r_1 +\nabla\log q_{\infty} \bigr) \\
&=r_1\|\nabla \log q_1 - \nabla \log r_1\|^2. \eqnum
\end{align*}
Plugging Equations (\ref{eq: fisher1}), (\ref{eq: fisher2}), and (\ref{eq: comb_term12}) into Equation~(\ref{eq:diff}), 
\begin{align*}
&\mathcal{L}^{\prime}(1^{-})=\mathbb{D}_{\texttt{\textup{F}}}\left(Q_1 \| Q_{\infty}\right)+\mathbb{D}_{\texttt{\textup{F}}}\left(R_1 \| Q_1\right)-\mathbb{D}_{\texttt{\textup{F}}}\left(R_1 \| Q_{\infty}\right).
\end{align*}
The results follows since ${\partial \mathcal{L}(\xi)}/{\partial \xi}\mid_{\xi=1^-}\le 0$.
\end{proof}

\subsection{Consistency of the RSCUSUM Algorithm}\label{sec:consistency}
We now apply the Lemma~\ref{lemma:rti} to prove the consistency of the RSCUSUM algorithm. Recall that $P_\infty \in \mathcal{G}_\infty, P_1 \in \mathcal{G}_1$ are the true (but unknown) pre- and post-change distributions. Also, the expectations $\mathbb{E}_\infty$ and $\mathbb{E}_1$ denote the expectations when the change occurs at $\infty$ (no change) or at $1$, respectively. Thus, under $\mathbb{E}_\infty$, every random variable $X_n$ has law $P_\infty$, and under $\mathbb{E}_1$, every random variable $X_n$ has law $P_1$.

\begin{lemma}[Positive and Negative Drifts]
\label{lemma:drifts}
Consider the instantaneous score function $X\mapsto z(X)$ as defined in Equation~(\ref{eq:scusum_instantaneous}). Under Assumption~\ref{assumption:nearness},
\begin{align*}
\label{eq:expst}
\mathbb{E}_\infty\left[z(X_1)\right] &= \mathbb{E}_\infty\left[ \mathcal{S}_{\texttt{\textup{H}}}(X_1, Q_{\infty})-\mathcal{S}_{\texttt{\textup{H}}}(X_1, Q_{1})\right] = \mathbb{D}_{\texttt{\textup{F}}}(P_{\infty} \| Q_\infty) - \mathbb{D}_{\texttt{\textup{F}}}(P_\infty \| Q_1)<0,\; \text{and}\\
    \mathbb{E}_{1}\left[z(X_1)\right] 
 &=\mathbb{E}_1\left[ \mathcal{S}_{\texttt{\textup{H}}}(X_1, Q_{\infty})-\mathcal{S}_{\texttt{\textup{H}}}(X_1, Q_{1})\right]  \ge \mathbb{D}_{\texttt{\textup{F}}}(Q_1 \| Q_{\infty})>0.
\end{align*}
\end{lemma}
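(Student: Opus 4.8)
The plan is to reduce both drift computations to one common identity expressing the expected instantaneous score as a difference of Fisher divergences, and then to handle the post-change case by invoking the reverse triangle inequality already established in Lemma~\ref{lemma:rti}. First I would record the identity that, for \emph{any} distribution $R$ satisfying the regularity conditions,
$$\mathbb{E}_{X\sim R}[z(X)] = \mathbb{D}_{\texttt{F}}(R \| Q_\infty) - \mathbb{D}_{\texttt{F}}(R \| Q_1).$$
This is exactly the computation carried out at the opening of the proof of Theorem~\ref{theorem:applicability}: writing $C_R \de \mathbb{E}_{X\sim R}[\tfrac12 \|\nabla_x \log r(X)\|_2^2]$, the Hyv\"arinen score-matching integration-by-parts identity (valid under Assumption~\ref{assumption:regularity}, in particular the vanishing-boundary condition $g(x)\nabla_x \log h(x) \to 0$) gives $\mathbb{E}_{X\sim R}[\mathcal{S}_{\texttt{H}}(X, Q)] = \mathbb{D}_{\texttt{F}}(R \| Q) - C_R$ for each $Q \in \{Q_\infty, Q_1\}$; subtracting the two, the $C_R$ terms cancel and the definition of $z$ in \eqref{eq:scusum_instantaneous} yields the display.

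For the negative (pre-change) drift I would simply specialize the identity to $R = P_\infty$, obtaining $\mathbb{E}_\infty[z(X_1)] = \mathbb{D}_{\texttt{F}}(P_\infty \| Q_\infty) - \mathbb{D}_{\texttt{F}}(P_\infty \| Q_1)$, which is strictly negative directly by Assumption~\ref{assumption:nearness}. For the positive (post-change) drift I would specialize the identity to $R = P_1$, giving $\mathbb{E}_1[z(X_1)] = \mathbb{D}_{\texttt{F}}(P_1 \| Q_\infty) - \mathbb{D}_{\texttt{F}}(P_1 \| Q_1)$. Since $P_1 \in \mathcal{G}_1$, Lemma~\ref{lemma:rti} applied with $R_1 = P_1$ states precisely that $\mathbb{D}_{\texttt{F}}(Q_1 \| Q_\infty) \le \mathbb{D}_{\texttt{F}}(P_1 \| Q_\infty) - \mathbb{D}_{\texttt{F}}(P_1 \| Q_1)$, i.e. $\mathbb{E}_1[z(X_1)] \ge \mathbb{D}_{\texttt{F}}(Q_1 \| Q_\infty)$. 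To close, I would note that $\mathbb{D}_{\texttt{F}}(Q_1 \| Q_\infty) > 0$: by Assumption~\ref{assumption:disjointconvex} the classes are disjoint so $Q_1 \neq Q_\infty$, and since they share support by Assumption~\ref{assumption:support}, their scores cannot coincide almost everywhere (otherwise $\log q_1 - \log q_\infty$ would be constant, forcing $Q_1 = Q_\infty$), whence the Fisher divergence is strictly positive.

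The main obstacle is the post-change bound. The negative-drift half is immediate once the Fisher-divergence identity is in hand, and the equality form of the post-change drift is equally routine. The real content is upgrading that equality to a lower bound $\mathbb{E}_1[z(X_1)] \ge \mathbb{D}_{\texttt{F}}(Q_1 \| Q_\infty)$ that is \emph{uniform} over the unknown true post-change law $P_1$ — rather than merely equal to the $P_1$-dependent quantity $\mathbb{D}_{\texttt{F}}(P_1 \| Q_\infty) - \mathbb{D}_{\texttt{F}}(P_1 \| Q_1)$. This uniform bound is exactly what the reverse triangle inequality supplies, so the difficulty has effectively been front-loaded into Lemma~\ref{lemma:rti}; at this stage one needs only to recognize $P_1$ as an admissible choice of $R_1$ in that lemma and to verify strict positivity of the resulting lower bound via disjointness.
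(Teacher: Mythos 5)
Your proposal is correct and follows essentially the same route as the paper: the Hyv\"arinen score-matching identity converts each expected score difference into a difference of Fisher divergences (with the $C_R$ terms cancelling), Assumption~\ref{assumption:nearness} gives the negative pre-change drift, and Lemma~\ref{lemma:rti} applied with $P_1$ in the role of $R_1$ gives the uniform post-change lower bound $\mathbb{D}_{\texttt{F}}(Q_1 \| Q_\infty)$. Your added justification that $\mathbb{D}_{\texttt{F}}(Q_1 \| Q_\infty) > 0$ via disjointness and shared support is a small refinement the paper leaves implicit, but the substance of the argument is identical.
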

\begin{proof}
Under some mild regularity conditions, \cite{hyvarinen2005estimation} proved that
\begin{align*}
    \mathbb{D}_{\texttt{\textup{F}}} (P \| Q) =\mathbb{E}_{X\sim P} \left[\frac{1}{2}\left \| \nabla_{x} \log p(X) \right \|_2^2 + \mathcal{S}_{\texttt{\textup{H}}}( X, Q)\right].
\end{align*}
As in Theorem~\ref{theorem:applicability}, we define
$$C_R \de \mathbb{E}_{X\sim R} \left[ \frac{1}{2} \left\| \nabla_{x} \log r(X)  \right\|_2^2 \right]$$
for any distribution $R$. Then 
\begin{align*}
     \mathbb{E}_\infty[\mathcal{S}_{\texttt{\textup{H}}}(X_1,  Q_{\infty})-\mathcal{S}_{\texttt{\textup{H}}}(X_1, Q_1)]  &=\mathbb{D}_{\texttt{\textup{F}}} (P_{\infty} \| Q_{\infty})-C_{P_{\infty}}-\mathbb{D}_{\texttt{\textup{F}}} (P_{\infty} \| Q_1)+C_{P_{\infty}} =\mathbb{D}_{\texttt{\textup{F}}} (P_{\infty} \| Q_\infty)-\mathbb{D}_{\texttt{\textup{F}}} (P_{\infty} \| Q_1), 
\end{align*}
which is negative by Assumption~\ref{assumption:nearness}.  Next:
\begin{align*}
     \mathbb{E}_{1}[\mathcal{S}_{\texttt{\textup{H}}}(X_1, Q_{\infty}) -\mathcal{S}_{\texttt{\textup{H}}}(X_1, Q_1)] 
     &=\mathbb{D}_{\texttt{\textup{F}}} (P_1 \| Q_{\infty})-C_{P_{1}}-\mathbb{D}_{\texttt{\textup{F}}} (P_1 \| Q_1)+C_{P_{1}} \ge \mathbb{D}_{\texttt{\textup{F}}} (Q_1 \| Q_{\infty}),
\end{align*}
where we applied Lemma \ref{lemma:rti} with $P_1$ playing the role of $R_1$.

\end{proof}

Lemma~\ref{lemma:drifts} shows that, prior to the change, the expected mean of instantaneous RSCUSUM score  $z(X)$ is negative. Consequently, the accumulated score has a negative drift at each time $n$ prior to the change. Thus, the RSCUSUM detection score $Z(n)$ is pushed toward zero before the change point. This intuitively makes a false alarm unlikely. In contrast, after the change, the instantaneous score has a positive mean, and the accumulated score has a positive drift. Thus, the RSCUSUM detection score will increase toward infinity and lead to a change detection event. Thus, the RSCUSUM algorithm can consistently detect the change and avoid false alarms, for every possible pre- and post-change distribution pair $(P_\infty, P_1)$.

\subsection{False Alarm Analysis of the RSCUSUM Algorithm}\label{sec:MFA}
In this section, we provide a bound on the mean time to false alarm for the RSCUSUM algorithm. For the analysis, we need a parameter $\rho > 0$ that satisfies the following key condition:
\begin{equation}
\label{eq: condition}    
\mathbb{E}_\infty[\exp(z_{\rho}(X_1))]\leq 1,
\end{equation}
where $z_\rho(x)$ is defined as a scalar multiple of the instantaneous score $z(x)$ defined in \eqref{eq:scusum_instantaneous}: 
\begin{align}
\label{eq:modified_det}
z_{\rho}(x)\de \rho z(x)=\rho\bigr(\mathcal{S}_{\texttt{\textup{H}}}(x, Q_{\infty})-\mathcal{S}_{\texttt{\textup{H}}}(x, Q_{1})\bigr).
\end{align}
We emphasize that $\rho$ is a quantity that depends upon $P_\infty, Q_\infty$, and $Q_1$ but which does not depend upon $\omega$.  For any choice of $\mathcal{G}_\infty, \mathcal{G}_1, P_\infty$ that satisfies the assumptions of this paper, we can show the existence of a $\rho$ that is a solution to \eqref{eq: condition}.

\begin{lemma}[Existence of appropriate $\rho$]
    \label{lemma: lambda}
Under Assumption~\ref{assumption:nearness}, there exists $\rho>0$ such that Inequality~(\ref{eq: condition}) holds. Moreover, either 1) there exists $\rho^{\star} \in (0,\infty)$ such that the equality of~(\ref{eq: condition}) holds, or 2) for all $\rho>0$, the inequality of~(\ref{eq: condition}) is strict. As noted in \cite{wu_IT_2024}, the second case is of no practical interest.
\end{lemma}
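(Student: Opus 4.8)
The plan is to study the moment generating function $M(\rho) \de \mathbb{E}_\infty[\exp(\rho\, z(X_1))]$ of the instantaneous score $z(X_1)$ from \eqref{eq:scusum_instantaneous} under the true pre-change law $P_\infty$, and to read off both the existence claim and the dichotomy from the shape of $M$ near $\rho = 0$ together with its convexity. The condition \eqref{eq: condition} is exactly $M(\rho) \le 1$, so the entire lemma reduces to a statement about the graph of $M$. This mirrors the argument establishing the existence of $\lambda$ for the non-robust SCUSUM algorithm in Theorem~\ref{thm:SCUSUM_IT_2024}.

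First I would record three facts. Trivially $M(0) = 1$. Since the paper assumes all moment generating functions are finite, $M$ is finite on a neighborhood of $0$, so I may differentiate under the expectation to obtain $M'(0) = \mathbb{E}_\infty[z(X_1)]$, which is \emph{strictly negative} by the negative-drift half of Lemma~\ref{lemma:drifts} (this is precisely where Assumption~\ref{assumption:nearness} enters). Finally, wherever it is finite, $M''(\rho) = \mathbb{E}_\infty[z(X_1)^2 \exp(\rho\, z(X_1))] \ge 0$, so $M$ is convex. Equivalently one may work with the cumulant generating function $\psi(\rho) \de \log M(\rho)$, which is convex with $\psi(0) = 0$ and $\psi'(0) < 0$; the condition then reads $\psi(\rho) \le 0$.

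Existence follows immediately: because $M(0) = 1$ and $M'(0) < 0$, we have $M(\rho) < 1$ for all sufficiently small $\rho > 0$, so every such $\rho$ satisfies \eqref{eq: condition}. For the dichotomy I would invoke convexity. The sublevel set $\{\rho \ge 0 : M(\rho) \le 1\}$ is an interval containing $0$ and, by the previous step, a right-neighborhood of $0$. If $M(\rho) > 1$ for some $\rho > 0$ (including the case $M(\rho) = +\infty$), then continuity of $M$ on the interval where it is finite together with the intermediate value theorem yields a crossing point $\rho^\star \in (0,\infty)$ with $M(\rho^\star) = 1$; convexity and $M'(0) < 0$ force this crossing to be unique, placing us in Case 1. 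Otherwise $M(\rho) \le 1$ for every $\rho > 0$, and I would sharpen this to a strict inequality: a convex $M$ cannot attain the value $1$ at an interior point without either crossing it or being constant, and $M'(0) < 0$ rules out constancy, so $M(\rho) < 1$ for all $\rho > 0$, which is Case 2.

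The main obstacle is the technical bookkeeping around the moment generating function rather than the convexity geometry, which is elementary. I must ensure $M$ is finite on a neighborhood of $0$ so that differentiation under the integral sign is legitimate and the identity $M'(0) = \mathbb{E}_\infty[z(X_1)]$ holds, and I must handle the possibility that $M$ blows up to $+\infty$ for large $\rho$ when producing the crossing in Case 1. The standing assumption that all moment generating functions appearing in the paper are finite, combined with dominated convergence on compact $\rho$-intervals to justify the term-by-term differentiation, is what makes these steps rigorous; everything else is the standard calculus of convex functions.
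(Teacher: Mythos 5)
Your proposal is correct and follows essentially the same route as the paper: the paper works with $h(\rho)=\mathbb{E}_\infty[\exp(z_\rho(X_1))]-1$, uses $h(0)=0$ and $h'(0)=\mathbb{D}_F(P_\infty\|Q_\infty)-\mathbb{D}_F(P_\infty\|Q_1)<0$ (Assumption~\ref{assumption:nearness}) to get existence, and then uses (strict) convexity of $h$ to derive the same dichotomy between a finite crossing point $\rho^\star$ and the everywhere-strict case. Your convexity/sublevel-set phrasing is an equivalent packaging of the paper's two-case analysis of the global minimum of $h$.
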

\begin{proof}
    We give proof in the appendix.
\end{proof}

The following two theorems characterize the relationship between detection threshold and delay and mean time to false alarm.
\begin{theorem}
\label{thm:arl}
Consider the stopping rule $T_{\texttt{\textup{RSCUSUM}}}$ defined in Equation~(\ref{eq:SCUSUM_rule}). Under Assumption~\ref{assumption:nearness}, for any $\omega>0$,
    \begin{equation*}
        \mathbb{E}_\infty[T_{\texttt{\textup{RSCUSUM}}}]\geq  e^{\rho\omega}.
    \end{equation*}
    for some $\rho$ which satisfies the inequality of Equation~\eqref{eq: condition}.  To satisfy the constraint of $\mathbb{E}_\infty[T_{\texttt{\textup{RSCUSUM}}}] \geq \gamma$, it is enough to set the threshold $\omega= {(\log \gamma)}/{\rho}$.
\end{theorem}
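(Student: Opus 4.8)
The plan is to exploit the supermartingale structure that condition~\eqref{eq: condition} hands us, together with the classical representation of the CUSUM recursion as a sequence of i.i.d.\ sequential probability ratio tests (SPRTs) with a reflecting barrier at the origin. First I would fix a $\rho>0$ satisfying $\mathbb{E}_\infty[\exp(\rho z(X_1))]\le 1$, whose existence is guaranteed by Lemma~\ref{lemma: lambda}. Writing the partial sums $S_n=\sum_{i=1}^n z(X_i)$ with $S_0=0$, the recursion in~\eqref{eq:SCUSUM_rule} unfolds into $Z(n)=S_n-\min_{0\le k\le n}S_k$. The engine of the proof is then the observation that, under $\mathbb{E}_\infty$, the process $M_n\de \exp(\rho S_n)$ is a nonnegative supermartingale, since $\mathbb{E}_\infty[M_n\mid\mathcal{F}_{n-1}]=M_{n-1}\,\mathbb{E}_\infty[\exp(\rho z(X_1))]\le M_{n-1}$ by the i.i.d.\ structure and~\eqref{eq: condition}.

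Next I would carry out the cycle decomposition. Each time $Z$ sits at $0$ a fresh cycle begins, and because the observations are i.i.d.\ these cycles are i.i.d.\ copies of a single SPRT that starts at $0$ and terminates the first time the walk rises to $\ge\omega$ (a detection) or returns to $\le 0$ (a renewal to the origin). Applying the optional stopping theorem to the nonnegative supermartingale $M_n$ at this SPRT termination time $\tau$ yields $1=\mathbb{E}_\infty[M_0]\ge\mathbb{E}_\infty[M_\tau]\ge e^{\rho\omega}\,P_\infty(\text{detection})$, so the per-cycle detection probability satisfies $p\le e^{-\rho\omega}$. The number of cycles $N$ until the first detection is therefore geometric with $\mathbb{E}_\infty[N]=1/p\ge e^{\rho\omega}$, and since every cycle consumes at least one observation we have $T_{\texttt{RSCUSUM}}\ge N$, giving $\mathbb{E}_\infty[T_{\texttt{RSCUSUM}}]\ge e^{\rho\omega}$. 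The final claim about the ARL constraint follows by substituting $\omega=\tfrac{\log\gamma}{\rho}$, which makes $e^{\rho\omega}=\gamma$.

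I expect the main obstacle to lie in the rigorous justification of the optional-stopping step and of the cycle decomposition rather than in the algebra. Specifically, I must argue that the SPRT terminates almost surely: this follows from the strictly negative pre-change drift $\mathbb{E}_\infty[z(X_1)]=\mathbb{D}_{\texttt{F}}(P_\infty\|Q_\infty)-\mathbb{D}_{\texttt{F}}(P_\infty\|Q_1)<0$ established in Lemma~\ref{lemma:drifts}, which forces $S_n\to-\infty$ and hence an almost-sure crossing of the lower boundary. For the optional-stopping inequality I would invoke the version for nonnegative supermartingales (via Fatou's lemma on $M_{\tau\wedge n}$), which sidesteps any need for uniform integrability; the overshoot at the upper boundary only strengthens the bound $M_\tau\ge e^{\rho\omega}$ and so causes no difficulty. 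The remaining care is in asserting that the renewals at $0$ genuinely produce i.i.d.\ cycles, which rests on the strong Markov property applied to the i.i.d.\ observation stream.
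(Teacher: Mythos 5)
Your proposal is correct and follows essentially the same route as the paper: a renewal-cycle (SPRT) decomposition of the CUSUM statistic at its returns to zero, an exponential (super)martingale bound showing the per-cycle crossing probability is at most $e^{-\rho\omega}$, and a geometric-series argument giving $\mathbb{E}_\infty[T_{\texttt{RSCUSUM}}]\geq e^{\rho\omega}$. The only cosmetic difference is that the paper first shifts the increments by $\delta=-\log\mathbb{E}_\infty[\exp(z_\rho(X_1))]\geq 0$ to obtain an exact mean-one martingale and then applies Doob's maximal inequality, whereas you work directly with the supermartingale $\exp(\rho S_n)$ and optional stopping via Fatou; both yield the same per-cycle bound.
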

\begin{proof}
    We give proof in the appendix.
\end{proof}

Theorem~\ref{thm:arl} implies that the ARL increases at least exponentially as the stopping threshold $\omega$ increases. 
\resp{Since the bound is valid for all those $\rho$ values for which $\mathbb{E}_\infty[\exp (z_{\rho}(X_1))] \leq 1$, the bound is valid for the biggest such number, which is $\rho^*$. The existence of such a $\rho^*$ is guaranteed by Lemma VI.6. Thus, we have
    \begin{equation}
\mathbb{E}_\infty[T_{\texttt{\textup{RSCUSUM}}}]\geq  e^{\rho^* \omega}.
    \end{equation}
As a result, by setting $\omega = {(\log \gamma)}/{\rho^*}$, we get 
\begin{equation} \mathbb{E}_\infty[T_{\texttt{\textup{RSCUSUM}}}]\geq  \gamma.
\end{equation}
We note again that $\rho^*$ depends on $P_\infty, Q_\infty$, and $Q_1$.}

\subsection{Delay Analysis of the RSCUSUM Algorithm}\label{sec:DelayAnalysis}
The following theorem gives the asymptotic performance of the RSCUSUM algorithm in terms of the detection delay.

\begin{theorem}
\label{thm:cond_edd}
   The stopping rule $T_{\texttt{\textup{RSCUSUM}}}$ satisfies
\begin{align*}
    \mathcal{L}_{\texttt{\textup{WADD}}}&(T_{\texttt{\textup{RSCUSUM}}}) \sim \mathcal{L}_{\texttt{\textup{CADD}}}(T_{\texttt{\textup{RSCUSUM}}}) \sim \mathbb{E}_{1}[T_{\texttt{\textup{RSCUSUM}}}]
    \sim 
    \frac{\omega}{ \mathbb{D}_{\texttt{\textup{F}}}(P_1\|Q_{\infty})-\mathbb{D}_{\texttt{\textup{F}}}(P_1\|Q_{1})} 
    \lesssim \frac{\omega}{\mathbb{D}_{\texttt{\textup{F}}}(Q_1 \| Q_\infty)},
\end{align*} 
as $\omega \to \infty$.

Furthermore, if we let $\omega = {(\log \gamma)}/{\rho}$ as in Theorem~\ref{thm:arl}, then we have $\mathbb{E}_\infty[T_{\texttt{\textup{RSCUSUM}}}]\geq  \gamma$, and  

\begin{equation*}
   \mathcal{L}_{\texttt{\textup{WADD}}}(T_{\texttt{\textup{RSCUSUM}}}) \sim \mathcal{L}_{\texttt{\textup{CADD}}}(T_{\texttt{\textup{RSCUSUM}}}) 
   \lesssim \frac{\log \gamma}{\rho \mathbb{D}_{\texttt{\textup{F}}}(Q_1 \| Q_\infty)},
\end{equation*}
as $\gamma \to \infty$.
\end{theorem}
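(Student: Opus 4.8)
The plan is to recognize $T_{\texttt{RSCUSUM}}$ as a CUSUM procedure whose increments are the i.i.d.\ scores $z(X_j)$ and to run the standard renewal-theoretic delay analysis, mirroring the SCUSUM treatment of \cite{wu_IT_2024}. The quantity that drives everything is the post-change drift. By Lemma~\ref{lemma:drifts}, when every $X_j \sim P_1$ the increments $z(X_j)$ are i.i.d.\ with positive mean
$$
\mu \de \mathbb{E}_1[z(X_1)] = \mathbb{D}_{\texttt{F}}(P_1 \| Q_\infty) - \mathbb{D}_{\texttt{F}}(P_1 \| Q_1) \ge \mathbb{D}_{\texttt{F}}(Q_1 \| Q_\infty) > 0,
$$
where the inequality is precisely the reverse triangle inequality of Lemma~\ref{lemma:rti}. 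This inequality is the entire source of the $\lesssim$ in the statement: a larger drift $\mu$ yields a smaller first-passage time, so replacing $\mu$ by its lower bound $\mathbb{D}_{\texttt{F}}(Q_1\|Q_\infty)$ only enlarges the ratio $\omega/\mu$.

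First I would reduce the worst-case delay to the delay of a change at the first sample. Using the max-of-partial-sums representation $Z(n) = \max_{0\le k\le n}\sum_{j=k+1}^n z(X_j)$, conditioning on $\mathcal{F}_{\nu-1}$ shows that the conditional expected delay in $\mathcal{L}_{\texttt{WADD}}$ is largest when the statistic has just reset, $Z(\nu-1)=0$; since the walk returns to $0$ infinitely often before the change, the essential supremum is attained there and $\mathcal{L}_{\texttt{WADD}}(T_{\texttt{RSCUSUM}}) = \mathbb{E}_1[T_{\texttt{RSCUSUM}}]$. Because $\mathcal{L}_{\texttt{CADD}} \le \mathcal{L}_{\texttt{WADD}}$ always holds, it then suffices to bound $\mathbb{E}_1[T_{\texttt{RSCUSUM}}]$ from above and to bound $\mathcal{L}_{\texttt{CADD}}$ from below, both by $\frac{\omega}{\mu}$ up to a factor $1+o(1)$.

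For the upper bound I would use $Z(n) \ge S_n$ with $S_n \de \sum_{j=1}^n z(X_j)$, giving $T_{\texttt{RSCUSUM}} \le \tau_\omega \de \inf\{n : S_n \ge \omega\}$; as $S_n$ is a positive-drift random walk with i.i.d.\ increments of mean $\mu$ and finite moment generating function, Wald's identity together with a bounded-overshoot estimate yields $\mathbb{E}_1[\tau_\omega] = (\omega + O(1))/\mu \sim \omega/\mu$. For the matching lower bound I would note that $Z(T) \ge \omega$ forces $S_T - S_{k} \ge \omega$ for some reset epoch $k \le T$, and a maximal law-of-large-numbers bound on $\max_{0\le k\le n}(S_n - S_k)$ shows that reaching level $\omega$ cannot happen in fewer than $\approx \omega/\mu$ steps, so $\mathbb{E}_1[T_{\texttt{RSCUSUM}}] \gtrsim \omega/\mu$ and likewise $\mathcal{L}_{\texttt{CADD}} \gtrsim \omega/\mu$. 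Combining the directions gives the chain
$$
\mathcal{L}_{\texttt{WADD}}(T_{\texttt{RSCUSUM}}) \sim \mathcal{L}_{\texttt{CADD}}(T_{\texttt{RSCUSUM}}) \sim \mathbb{E}_1[T_{\texttt{RSCUSUM}}] \sim \frac{\omega}{\mu},
$$
and the bound $\mu \ge \mathbb{D}_{\texttt{F}}(Q_1\|Q_\infty)$ delivers the $\lesssim$. The second display is immediate: setting $\omega = \log\gamma/\rho$ gives $\mathbb{E}_\infty[T_{\texttt{RSCUSUM}}] \ge \gamma$ by Theorem~\ref{thm:arl} and turns $\omega/\mu$ into $\log\gamma/(\rho\mu) \le \log\gamma/(\rho\,\mathbb{D}_{\texttt{F}}(Q_1\|Q_\infty))$.

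The main obstacle I expect is making the two bounds match so that the relations are genuine asymptotic equivalences rather than one-sided inequalities. The upper bound is routine once $Z(n)\ge S_n$ collapses the delay to a single random-walk first passage, and the overshoot is controlled by the finite-moment-generating-function assumption stated in the paper. The lower bound is the delicate part: because the cumulative scores do not possess the likelihood-ratio martingale structure that classical CUSUM exploits, I cannot invoke a clean change-of-measure identity and must instead argue directly from the i.i.d.\ structure of $z(X_j)$ under $P_1$ and the max representation of $Z(n)$, exactly as in the SCUSUM delay proof of \cite{wu_IT_2024}. A secondary technical point is justifying the reduction $\mathcal{L}_{\texttt{WADD}} = \mathbb{E}_1[T_{\texttt{RSCUSUM}}]$, which rests on the monotonicity of the first-passage time in the initial value of the detection statistic.
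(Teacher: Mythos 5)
Your proposal is correct and follows essentially the same route as the paper: compare $T_{\texttt{RSCUSUM}}$ with the first-passage time of the unreflected random walk $S_n=\sum_{i=1}^n z(X_i)$, apply Wald's identity with an overshoot bound (the paper invokes Lorden's 1970 excess theorem where you cite a generic bounded-overshoot estimate), and obtain the $\lesssim$ from $\mu=\mathbb{D}_{\texttt{F}}(P_1\|Q_\infty)-\mathbb{D}_{\texttt{F}}(P_1\|Q_1)\ge \mathbb{D}_{\texttt{F}}(Q_1\|Q_\infty)$ via Lemmas~\ref{lemma:rti} and~\ref{lemma:drifts}. If anything, you are more explicit than the paper about the matching lower bound and the reduction $\mathcal{L}_{\texttt{WADD}}=\mathbb{E}_1[T_{\texttt{RSCUSUM}}]$, both of which the paper's appendix largely asserts rather than proves.
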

\begin{proof}
    We give proof in the appendix. 
\end{proof}

In the above theorem, we have used the notation $g(c)\lesssim h(c)$ as $c\to c_0$ to indicate that $\lim \sup {g(c)}/{h(c)} \leq 1$ as $c\to c_0$ for any two functions $c\mapsto g(c)$ and $c\mapsto h(c)$.  

Theorem~\ref{thm:cond_edd} implies that the {expected detection delay} (EDD), $\mathbb{E}_{1}[T_{\texttt{\textup{RSCUSUM}}}]$, increases at most linearly as the stop threshold $\omega$ increases for large values of $\omega$. Again, since the bound is valid for all those $\rho$ values for which $\mathbb{E}_\infty[\exp (z_{\rho}(X_1))] \leq 1$, the bound is valid for the biggest such number, which is $\rho^*$. This gives
\begin{equation}
   \mathcal{L}_{\texttt{\textup{WADD}}}(T_{\texttt{\textup{RSCUSUM}}}) \sim \mathcal{L}_{\texttt{\textup{CADD}}}(T_{\texttt{\textup{RSCUSUM}}}) 
   \lesssim \frac{\log \gamma}{\rho^* \mathbb{D}_{\texttt{\textup{F}}}(Q_1 \| Q_\infty)},\;\;
\end{equation}
as $\gamma \to \infty$.

\section{Identification of the Least Favorable Distributions}
\label{sec:least_favorable_distribution}

In this section, we revisit the construction considered in Theorem~\ref{theorem:applicability}.  Consider a general parametric distribution family $\mathcal{P}$ defined on $\mathcal{X}$. We use $\mathcal{P}_m$ to denote a set of a finite number of distributions belonging to $\mathcal{P}$, namely \begin{align*}
    \mathcal{P}_m = \{P_i,\; i=1,\dots, m:\; P_i\in \mathcal{P}\},\; m\in \mathbb{N}^{+}.
\end{align*}
We use $p_i$ to denote the density of each distribution $P_i, \;i=1, \dots, m$. Then, we define a convex set of densities 
\begin{equation}
\label{eq:convex_post_family}
    \mathcal{A}_m \de \biggl\{ x \mapsto \sum_{i=1}^m \alpha_i p_i(x):  \sum_{i=1}^m \alpha_i=1, \alpha_i \geq 0\biggr\}. 
\end{equation}
We further define a set of functions
\begin{equation}
\label{eq:convex_gradient_density}
    \mathcal{B}_m \de \biggl\{ x \mapsto \sum_{i=1}^m \beta_i(x) \nabla_x \log  p_i(x) \\
    \text{ s.t. } \; \sum_{i=1}^m \beta_i(x)=1,\; \beta_i(x) \geq 0,\; p_i \in \mathcal{P}_m \biggr\}. 
\end{equation}

We provide a result to identify the distribution in the set $\mathcal{A}_m$ that is nearest to some distribution $R \not\in \mathcal{A}_m$.

\begin{theorem} \label{thm_general_LFD}
    Let $R$ be some distribution such that $R \not\in \mathcal{A}_m$.  Assume that there exists an element $P_* \in \mathcal{A}_m$  (with density $p_*$) such that
\begin{equation}
  \mathbb{E}_{P_*}\biggl[ \rev{\frac{1}{2}} \|\nabla_x \log p_*(X) -\nabla_x \log r(X) \|_2^2 \biggr]
    \\
    = \min_{p \in \mathcal{A}_m, \phi \in \mathcal{B}_m} \mathbb{E}_{P} \biggl[ \rev{\frac{1}{2}} \|\phi (X) -\nabla_x \log r(X) \|_2^2 \biggr]. 
    \label{eq10}
\end{equation}
    Then, we have
\begin{equation*}
    \mathbb{E}_{P_*}\biggl[ \rev{\frac{1}{2}} \|\nabla_x \log p_*(X) -\nabla_x \log r(X) \|_2^2 \biggr]
    \\
    = \min_{p \in \mathcal{A}_m} \mathbb{E}_{P} \biggl[ \rev{\frac{1}{2}} \|\nabla_x \log p(X) -\nabla_x \log r(X) \|_2^2 \biggr].
\end{equation*}
\end{theorem}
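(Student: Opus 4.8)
The plan is to read the right-hand side of the claimed identity as the minimization of the coupled Fisher-divergence objective $J(p) \de \mathbb{E}_{P}\bigl[\tfrac{1}{2}\|\nabla_x \log p(X) - \nabla_x \log r(X)\|_2^2\bigr]$ over $p \in \mathcal{A}_m$, in which the mixture density $p$ serves simultaneously as the sampling measure and as the source of the score $\nabla_x \log p$. The hypothesis \eqref{eq10}, by contrast, concerns the \emph{relaxed} objective $\tilde{J}(p,\phi) \de \mathbb{E}_{P}\bigl[\tfrac{1}{2}\|\phi(X) - \nabla_x \log r(X)\|_2^2\bigr]$ in which the score $\phi \in \mathcal{B}_m$ is decoupled from the sampling density $p \in \mathcal{A}_m$ and minimized independently. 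Because decoupling can only enlarge the feasible region, one expects $\min_{p,\phi}\tilde{J} \le \min_{p} J$; the whole content of the theorem is that equality holds and that the relaxed minimizer $P_*$ also solves the coupled problem.

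The conceptual crux — the step that carries essentially all the work — is a single algebraic identity showing that the coupled problem is a \emph{feasible point} of the relaxed one. For any $p = \sum_{i=1}^m \alpha_i p_i \in \mathcal{A}_m$, writing $\nabla_x p_i = p_i \nabla_x \log p_i$ gives
\[
\nabla_x \log p(x) = \frac{\sum_{i=1}^m \alpha_i p_i(x)\, \nabla_x \log p_i(x)}{\sum_{j=1}^m \alpha_j p_j(x)} = \sum_{i=1}^m \beta_i(x)\, \nabla_x \log p_i(x),
\]
with $\beta_i(x) \de \alpha_i p_i(x)\big/\sum_{j} \alpha_j p_j(x)$. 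These weights are nonnegative and sum to one pointwise, so $\nabla_x \log p \in \mathcal{B}_m$. Consequently the pair $(p, \nabla_x \log p)$ is admissible for the relaxed minimization and attains value $\tilde{J}(p,\nabla_x \log p) = J(p)$. The only technical care required here is that $\beta_i(x)$ be well defined, which follows from the differentiability in Assumption~\ref{assumption:regularity} and the shared-support condition of Assumption~\ref{assumption:support} (the denominator is positive on the common support $\mathcal{X}$).

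With this identity in hand the remainder is a short chain of inequalities, which I would carry out as follows. First, taking $p$ to be the minimizer of $J$ over $\mathcal{A}_m$ and invoking the admissible pair $(p,\nabla_x \log p)$ yields $\min_{p,\phi}\tilde{J}(p,\phi) \le \min_{p \in \mathcal{A}_m} J(p)$. Second, the hypothesis \eqref{eq10} states precisely that $J(P_*) = \min_{p,\phi}\tilde{J}(p,\phi)$, since its left-hand side is exactly the coupled objective evaluated at $P_*$. Third, because $P_* \in \mathcal{A}_m$, the trivial bound $J(P_*) \ge \min_{p \in \mathcal{A}_m} J(p)$ holds. Combining the three gives $\min_{p} J(p) \le J(P_*) = \min_{p,\phi}\tilde{J} \le \min_{p} J(p)$, which forces $J(P_*) = \min_{p \in \mathcal{A}_m} J(p)$ — exactly the asserted equality. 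There is no substantive obstacle beyond recognizing the feasibility identity of the second paragraph; once $\nabla_x \log p \in \mathcal{B}_m$ is established, the equality of the two minima and the optimality of $P_*$ for the coupled problem follow immediately.
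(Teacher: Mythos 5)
Your proposal is correct and follows essentially the same route as the paper: both hinge on the identity $\nabla_x \log p = \sum_i \beta_i(x)\nabla_x \log p_i$ with pointwise convex weights $\beta_i(x) = \alpha_i p_i(x)/\sum_j \alpha_j p_j(x)$, showing $\nabla_x \log p \in \mathcal{B}_m$ so that the coupled problem embeds into the relaxed one, after which Condition~\eqref{eq10} closes the argument. Your explicit three-step chain of inequalities is just a slightly more spelled-out version of the paper's concluding sentence.
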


\begin{proof}
    For any $p \in \mathcal{A}_m$, there exist $w_i$ such that $p = \sum_{i=1}^m w_i p_i$, where $w_i \geq 0$ and $\sum_{i=1}^m w_i = 1$. Direct calculations give
    \begin{align*}
        \mathbb{E}_{P} \biggl[ \rev{\frac{1}{2}} \|\nabla_x \log p(X) -\nabla_x \log r(X) \|_2^2 \biggr] &= \mathbb{E}_{P}\biggl[ \rev{\frac{1}{2}} \biggl\| \frac{\nabla_x p(X)}{ p(X)} -\nabla_x \log r(X)\biggr\|_2^2 \biggr] 
         \\
        &= \mathbb{E}_{P}\biggl[ \rev{\frac{1}{2}} \biggl\| \frac{\sum_{i=1}^m w_i \nabla_x p_i(X)}{ \sum_{i=1}^m w_i p_i(X)} -\nabla_x \log r(X) \biggr\|_2^2 \biggr] 
        \\
        &= \mathbb{E}_{P}\biggl[ \rev{\frac{1}{2}} \biggl\| \sum_{i=1}^m u_i(X) \nabla_x \log p_i(X) - \nabla_x \log r(X) \biggr\|_2^2 \biggr],
    \end{align*}
    where 
    $$u_i(x) = \frac{w_ip_i(x)}{\sum_{k=1}^m w_k p_k(x)}$$
    for all $i=1,\ldots,m$, and $\sum_{i=1}^m u_i(x)=1$. Clearly,
    $$\nabla_x \log u_i(x) - \nabla_x \log u_j(x)  = \nabla_x \log  p_i(x) - \nabla_x \log  p_j(x)$$
    for all $1 \le i,j \le m$.
    
    Using Condition~(\ref{eq10}), the quantity above is minimized at $p = p_*$, which concludes the proof. 
\end{proof}

Next, we provide a method to find the LFD in a class of Gaussian mixture models.  \resp{We begin by defining a $V$-norm:}

\begin{definition}[$V$-norm]
    \resp{The $V$-norm of a vector $\theta$ is given as \begin{equation}\| \theta\|_V \de \sqrt{\theta^T V^{-2}\theta}
    \end{equation}}
\end{definition}

\begin{theorem}
\label{theorem:gaussdisjoint}
    Let $\mathbb{M}_\infty, \mathbb{M}_1 \subset \mathbb{R}^d$ be disjoint, convex, and compact sets in $d-$dimensional Euclidean space.  Fix a symmetric, positive-definite matrix $V \in \mathbb{R}^{d \times d}$.  For $\theta \in \mathbb{M}_\infty \cup \mathbb{M}_1$, let $G_\theta$ be the Gaussian distribution with density $g_\theta$ parameterized by covariance matrix $V$ and mean $\theta: G_\theta = \mathcal{N}(\theta, V)$.  Let the distribution class $\mathcal{G}_\infty$ be the convex hull of all Gaussian $\{G_\theta : \theta \in \mathbb{M}_\infty \}$ and let $\mathcal{G}_1$ be the corresponding convex hull of $\{G_\theta : \theta \in \mathbb{M}_1 \}$. Then, $\mathcal{G}_\infty, \mathcal{G}_1$ are disjoint.
\end{theorem}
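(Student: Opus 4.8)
The plan is to argue by contradiction using the injectivity of the characteristic-function (Fourier) transform, which reduces the claim to the classical identifiability of Gaussian location mixtures. Suppose, for contradiction, that some distribution $P$ lies in $\mathcal{G}_\infty \cap \mathcal{G}_1$. Since each class is the convex hull of a family of Gaussians, $P$ is a finite mixture admitting two representations,
\[
P = \sum_{i=1}^k \alpha_i\, G_{\theta_i} = \sum_{j=1}^\ell \beta_j\, G_{\phi_j},
\]
with $\theta_i \in \mathbb{M}_\infty$, $\phi_j \in \mathbb{M}_1$, all weights nonnegative, and $\sum_i \alpha_i = \sum_j \beta_j = 1$. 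The goal is to show that these two representations force a common mean, contradicting $\mathbb{M}_\infty \cap \mathbb{M}_1 = \emptyset$.

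First I would compute the characteristic function of a single component: for $G_\theta = \mathcal{N}(\theta, V)$ it equals $\exp(i\theta^\top t - \tfrac{1}{2} t^\top V t)$. Because every component shares the same covariance $V$, the characteristic function of either mixture factors as
\[
\widehat{P}(t) = \exp\!\left(-\tfrac{1}{2} t^\top V t\right) \sum_{i} \alpha_i \exp(i \theta_i^\top t),
\]
and likewise with $\beta_j, \phi_j$. The Gaussian prefactor $\exp(-\tfrac{1}{2} t^\top V t)$ is strictly positive for every $t \in \mathbb{R}^d$ (here $V \succ 0$ is used), so I can divide it out to obtain, for all $t$,
\[
\sum_{i} \alpha_i \exp(i \theta_i^\top t) = \sum_{j} \beta_j \exp(i \phi_j^\top t).
\]
The two sides are precisely the Fourier transforms of the discrete mixing measures $\mu_\infty = \sum_i \alpha_i \delta_{\theta_i}$ and $\mu_1 = \sum_j \beta_j \delta_{\phi_j}$.

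The final step is to invoke the uniqueness theorem for characteristic functions: the Fourier transform is injective on finite signed Borel measures, so $\mu_\infty - \mu_1$ having vanishing transform forces $\mu_\infty = \mu_1$ as measures. Matching atoms then shows that every $\theta_i$ with $\alpha_i > 0$ must coincide with some $\phi_j$, hence $\theta_i \in \mathbb{M}_\infty \cap \mathbb{M}_1$. Since the weights sum to one there is at least one such $\theta_i$, contradicting the disjointness of $\mathbb{M}_\infty$ and $\mathbb{M}_1$, and this contradiction establishes $\mathcal{G}_\infty \cap \mathcal{G}_1 = \emptyset$. The main technical point to state carefully is this identifiability/injectivity step, equivalently that a finite Gaussian location mixture determines its mixing measure; I anticipate little genuine difficulty there, since it follows from the standard uniqueness theorem for characteristic functions. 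It is worth remarking that the argument uses only the \emph{disjointness} of $\mathbb{M}_\infty, \mathbb{M}_1$ together with $V \succ 0$: the compactness and convexity hypotheses are not needed for this statement and presumably serve the subsequent identification of the least-favorable pair.
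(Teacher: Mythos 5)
Your proof is correct, but it takes a genuinely different route from the paper's. The paper argues via the score function: writing the putative common element $R$ as a mixture in two ways, it computes $\nabla_x \log r(x) = -V^{-1}x + V^{-1}\sum_i \beta^i(x)\mu^i$ for each representation, where the $\beta^i(x)$ are the posterior (responsibility) weights, which are nonnegative and sum to one. Equating the two expressions pointwise forces a convex combination of points of $\mathbb{M}_1$ to equal a convex combination of points of $\mathbb{M}_\infty$; by \emph{convexity} of the two mean sets these combinations land in $\mathbb{M}_1$ and $\mathbb{M}_\infty$ respectively, contradicting their disjointness. Your argument instead divides the common factor $\exp(-\tfrac{1}{2}t^\top V t)$ out of the characteristic function and invokes the uniqueness theorem for Fourier transforms of finite signed measures, i.e., identifiability of Gaussian location mixtures with a fixed covariance. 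Each approach buys something: yours is strictly more general for this particular statement, since (as you correctly observe) it needs only disjointness of $\mathbb{M}_\infty$ and $\mathbb{M}_1$ and dispenses with convexity and compactness, and it yields the stronger conclusion that the mixing measures themselves coincide. The paper's score-based argument, by contrast, is deliberately aligned with the machinery used throughout (the posterior-weight representation of $\nabla_x\log$ of a mixture from Theorem~\ref{thm_general_LFD}), and the identity it derives is reused almost verbatim in the proof of Theorem~\ref{theorem:gaussvnorm} to identify the least-favorable pair, which is where the convexity and compactness hypotheses earn their keep. One small point to make explicit if you write this up: the convex hulls in the theorem are understood as \emph{finite} convex combinations (consistent with the paper's definition in \eqref{eq:convex_post_family}), which is what licenses writing $P$ as a finite mixture and speaking of atoms; if arbitrary mixing measures over $\mathbb{M}_\infty$ were allowed, your Fourier argument would still go through, but the final step would compare supports of the mixing measures rather than matching finitely many atoms.
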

\begin{proof}
    Suppose for contradiction that there exists distribution $R$ with density $r$ such that $R \in \mathcal{G}_\infty, R \in \mathcal{G}_1$.

As $R \in \mathcal{G}_\infty$, we can pick $\mu_\infty^1, \mu_\infty^2, ..., \mu_\infty^m$ from $\mathbb{M}_\infty$ and $w_\infty^1, w_\infty^2, ..., w_\infty^m \in \mathbb{R}$ such that $w_\infty^i \geq 0$ and $\sum_{i=1}^m w_\infty^i = 1$ where 
$$r(x) = \sum_{i=1}^m w_\infty^i g_{\mu_\infty^i}(x).$$  
Consider the score of $R$ using the result of Theorem~\ref{thm_general_LFD}:
\begin{align*}
\nabla_x \log r(x) &= -\sum_{i=1}^m \beta_\infty^i(x) V^{-1}[x - \mu_\infty^i]  =-V^{-1}x + V^{-1}\sum_{i=1}^m \beta_\infty^i(x) \mu_\infty^i, \eqnum
\end{align*}
for some function $\beta_\infty$ given by Theorem~\ref{thm_general_LFD}.  But as $R \in \mathcal{G}_1$, we can also pick $\mu_1^1, \mu_1^2,..., \mu_1^n \in \mathbb{M}_1$ and $w_1^1, w_1^2, ..., w_1^n \in \mathbb{R} $ such that $ w_1^i \geq 0$ and $\sum_{i=1}^n w_1^i = 1$ where 
$$r(x) = \sum_{i=1}^n w_1^i g_{\mu_1^i}(x).$$
Then:
\begin{align*}
\nabla_x \log r(x) &= -\sum_{i=1}^n \beta_1^i(x) V^{-1}[x - \mu_1^i] = -V^{-1}x +V^{-1}\sum_{i=1}^n \beta_1^i(x) \mu_1^i, \eqnum
\end{align*}
for some function $\beta_1$ given by Theorem~\ref{thm_general_LFD}.  Certainly, these different expressions for the score of $R$ must be equivalent:
\begin{equation}
 -V^{-1}x +V^{-1} \sum_{i=1}^n \beta_1^i(x) \mu_1^i\\
 =  -V^{-1}x +V^{-1} \sum_{i=1}^m \beta_\infty^i(x) \mu_\infty^i.
\end{equation}
Adding $V^{-1}x$ to both sides and left-multiplying by $V$, we have:
\begin{align}
\sum_{i=1}^n \beta_1^i(x) \mu_1^i  =  \sum_{i=1}^m \beta_\infty^i(x) \mu_\infty^i.
\end{align}
For any $x$, $\sum_{i=1}^n \beta_1^i(x) \mu_1^i \in \mathbb{M}_1$, and $\sum_{i=1}^m \beta_\infty^i(x) \mu_\infty^i \in \mathbb{M}_\infty$ as $\mathbb{M}_\infty, \mathbb{M}_1$ are defined to each be convex.  But we have defined $\mathbb{M}_\infty$ and $\mathbb{M}_1$ to be disjoint.  Thus, there is a contradiction and $\mathcal{G}_\infty, \mathcal{G}_1$ are disjoint.
\end{proof}

\begin{remark}
    As $\mathbb{M}_\infty, \mathbb{M}_1$ are compact and as the squared $V$-norm is continuous with respect to its arguments, we know that 
    \begin{equation}
    \label{eq:nearest_means}
        (\hat{\mu}_\infty, \hat{\mu}_1) = \argmin_{(\mu_\infty, \mu_1) \in \mathbb{M}_\infty \times \mathbb{M}_1} \| \mu_\infty - \mu_1 \|_V^2
    \end{equation}
    exists.
\end{remark}
\begin{theorem}
\label{theorem:gaussvnorm}
Let $\mathbb{M}_\infty, \mathbb{M}_1, V, \mathcal{G}_\infty, \mathcal{G}_1$ be defined as in Theorem~\ref{theorem:gaussdisjoint}.
     Let $\hat{\mu}_\infty, \hat{\mu}_1$ be defined as in \eqref{eq:nearest_means}.  
    Then, the distributions $G_{\hat{\mu}_\infty}, G_{\hat{\mu}_1}$ are the nearest elements of $\mathcal{G}_\infty, \mathcal{G}_1$ under the Fisher distance: 
    \begin{equation}
    \label{eq:min_equality}
        \mathbb{D}_{\texttt{\textup{F}}}( G_{\hat{\mu}_1} \| G_{\hat{\mu}_\infty}) = \resp{\min_{(R_\infty, R_1) \in \mathcal{G}_\infty \times \mathcal{G}_1} \mathbb{D}_{\texttt{\textup{F}}}(R_1 \| R_\infty)}.
    \end{equation}
\end{theorem}

\begin{proof}

We begin by showing that
\begin{equation}
\label{eq:forward_ineq}
    \mathbb{D}_{\texttt{\textup{F}}}( G_{\hat{\mu}_1} \| G_{\hat{\mu}_\infty}) \geq \resp{\min_{(R_\infty, R_1) \in \mathcal{G}_\infty \times \mathcal{G}_1} \mathbb{D}_{\texttt{\textup{F}}}(R_1 \| R_\infty)}.
\end{equation}
Expanding the right-hand-side of \eqref{eq:forward_ineq}:
\begin{align*}
\resp{\min_{(R_\infty, R_1) \in \mathcal{G}_\infty \times \mathcal{G}_1}}\,  &\mathbb{E}_{R_1} \biggl[ \rev{\frac{1}{2}} \bigg\| \nabla_x \log r_1(X)  -\nabla_x \log r_\infty(X) \bigg\|_2^2 \biggr] \\
&\le\quad \mathbb{E}_{G_{\hat{\mu}_1}} \biggl[ \rev{\frac{1}{2}} \bigg\| \nabla_x \log g_{\hat{\mu}_1}(X) -\nabla_x \log g_{\hat{\mu}_\infty}(X) \bigg\|_2^2 \biggr] = \mathbb{D}_{\texttt{\textup{F}}}(G_{\hat{\mu}_1} || G_{\hat{\mu}_\infty}). \eqnum
\end{align*} 

We will now prove \eqref{eq:min_equality} by proving the converse of \eqref{eq:forward_ineq}. Consider an arbitrary element of $\mathcal{G}_\infty$. From the construction of $\mathcal{G}_\infty$, this element can be written as 
$$r_\infty(x) = \sum_{i=1}^m w_\infty^i g_{\mu_\infty^i}(x)$$
for some $\mu_\infty^i, ..., \mu_\infty^m \in \mathbb{M}_\infty$ and for some $w_\infty^1, ..., w_\infty^m$ such that $w_\infty^i \geq 0$ and such that $\sum_{i=1}^m w_\infty^i = 1$.  We similarly consider an arbitrary element of $\mathcal{G}_1$ and observe that it can be written as 
$$r_1(x) = \sum_{i=1}^n w_1^i g_{\mu_1^i}(x)$$
for some $\mu_1^1, ..., \mu_1^n \in \mathbb{M}_1$ and for some $w_1^1, ..., w_1^n$ such that $w_1^i \geq 0$ and $\sum_{i=1}^n w_1^i = 1$.

Next, we express $\mathbb{D}_{\texttt{\textup{F}}}(R_1 \| R_\infty)$:
\begin{equation}
\label{eq:reduction_1}
\mathbb{D}_{\texttt{\textup{F}}}(R_1 \| R_\infty) \\
= \mathbb{E}_{R_1} \bigg[ \rev{\frac{1}{2}} \bigg\| \nabla_x \log r_1(X) - \nabla_x \log r_\infty (X)  \bigg\|^2_2\bigg].
\end{equation}
We substitute $\nabla_x \log r_1(x) = -\sum_{i=1}^n \beta_1^i(x)V^{-1}(x-\mu_1^i)$ and $\nabla_x \log r_\infty(x) = -\sum_{i=1}^m \beta_\infty^i(x)V^{-1}(x-\mu_\infty^i)$, where $\beta_\infty, \beta_1$ are functions given by Theorem~\ref{thm_general_LFD}.  Then,  \eqref{eq:reduction_1} reduces to:
\begin{align}
\mathbb{E}_{R_1} \bigg[ \rev{\frac{1}{2}}  \bigg\| \sum_{i=1}^m \beta_\infty^i(X)  \mu_\infty^i -\sum_{i=1}^n \beta_1^i(X) \mu_1^i\bigg\|^2_V\bigg],
\end{align}
But by the definition of the convex hull, we know that for any $x$, $\sum_{i=1}^m \beta_\infty^i(x)\mu_\infty^i \in \mathbb{M}_\infty$ and $\sum_{i=1}^n \beta_1^i(x)\mu_1^i \in \mathbb{M}_1$.  Clearly, for all $x$:
\begin{align}
\bigg\| \sum_{i=1}^m \beta_\infty^i(x)  \mu_\infty^i -\sum_{i=1}^n \beta_1^i(x) \mu_1^i\bigg\|^2_V \geq \| \hat{\mu}_{\infty} - \hat{\mu}_{1}\|^2_V,
\end{align}
and therefore
\begin{equation*}
\begin{split}
\mathbb{E}_{R_1}&\bigg[ \rev{\frac{1}{2}} \bigg\| \sum_{i=1}^m \beta_\infty^i(X)  \mu_\infty^i -\sum_{i=1}^n \beta_1^i(X) \mu_1^i\bigg\|^2_V \bigg]
 \geq \mathbb{E}_{G_{\hat{\mu}_1}}\bigg[ \rev{\frac{1}{2}} \| \hat{\mu}_{\infty} - \hat{\mu}_{1} \|^2_V \bigg] 
= \mathbb{D}_{\texttt{\textup{F}}}(G_{\hat{\mu}_{1}} \| G_{\hat{\mu}_{\infty}}).
\end{split}
\end{equation*}

\end{proof}

\begin{remark}
Although the families $\mathcal{G}_\infty, \mathcal{G}_1$ defined in this section contain both Gaussian distributions and Gaussian mixture models, the nearest pair of distributions is always a pair of Gaussian distributions.
\end{remark}

\section{Numerical Simulations}
\label{sec:results}

In this section, we present numerical results for synthetic data to demonstrate that the RSCUSUM algorithm can consistently detect a change in the distribution of the data stream.  We will further compare the performance of the RSCUSUM algorithm against the performance of a Nonrobust-SCUSUM algorithm, where the Hyv\"arinen scores are calculated using arbitrary distributions $H_\infty \in \mathcal{G}_\infty, H_1 \in \mathcal{G}_1$:

\begin{algorithm}
\DontPrintSemicolon
\caption{Nonrobust-SCUSUM Quickest Change Detection Algorithm}
\label{algm:nonrobust_scusum}
\KwInput{Hyv\"arinen score functions 
$\mathcal{S}_{\texttt{\textup{H}}}(\cdot, H_{\infty})$ and $\mathcal{S}_{\texttt{\textup{H}}}(\cdot, H_{1})$ of arbitrary distributions in 
$\mathcal{G}_\infty, \mathcal{G}_1$, respectively.} 
\SetKwProg{Fn}{Initialization}{:}{}
  \Fn{}{
       Current time $k=0$,  $\omega>0$, and $\Psi(0)=0$}
\While{$\Psi(k)<\omega$}{
$k = k+1$\\
Update $\psi(X_k) = \mathcal{S}_{\texttt{\textup{H}}}(X_{k}, H_{\infty})-\mathcal{S}_{\texttt{\textup{H}}}(X_{k}, H_{1})$\\
Update $\Psi(k) = \max(\Psi(k-1)+\psi(X_k), 0)$\;
}
Record the current time $k$ as the stopping time $T_{\texttt{N-SCUSUM}}$\;
\KwOutput{$T_{\texttt{N-SCUSUM}}$}
\end{algorithm}

We will consider Gaussian mixture models following the setup of Theorem~\ref{theorem:gaussdisjoint} and a Gauss-Bernoulli Restricted Boltzmann Machine. We recall that we use EDD for a stopping time $T$ to denote its expected detected delay, defined as
$$
\textup{EDD}(T) = \mathbb{E}_1[T]. 
$$
For algorithms of the CUSUM type, it is well-known that in the i.i.d. setting, EDD coincides with WADD and CADD up to a constant (which is at most $1$, depending on the way WADD is defined) \cite{veeravalli2014quickest}.

\subsection{Gaussian Mixture Model Numerical Simulation}

We define $\mathbb{M}_\infty \subset \mathbb{R}^2$ to be the set of all points in the convex hull of $\{ (-0.25, -0.25)^T,(-1.5, -1.5)^T \}$ and further define $\mathbb{M}_1 \subset \mathbb{R}^2$ to be the set of all points in the convex hull of $\{ (0.25, 0.25)^T, (0.75, 0.75)^T\}$.  For any $\mu \in \mathbb{M}_\infty \cup \mathbb{M}_1$, we define $G_\mu = \mathcal{N}(\mu, V)$ where
\begin{equation}
V = 
    \begin{bmatrix}
        2 & 0.2 \\
        0.2 & 2
    \end{bmatrix}.
\end{equation}

We define $\mathcal{G}_\infty$ to be the convex hull of $\{G_\mu : \mu \in \mathbb{M}_\infty \}$ and define $\mathcal{G}_1$ to be the convex hull of $\{ G_\mu : \mu \in \mathbb{M}_1 \}$.
Note that $\mathcal{G}_\infty, \mathcal{G}_1$ contain both Gaussian distributions and Gaussian mixture models.  For ease of notation, we identify specific Gaussian distributions from $\mathcal{G}_\infty, \mathcal{G}_1$ in Table~\ref{table:distributions}.

\renewcommand{\arraystretch}{1.3}

\begin{table}[H]
\begin{center}
    \begin{tabular}{|c|c|c|}
    \hline
    Gaussian Distribution & Mean & Covariance \\
    \hline
    $R_1^A$ & $(0.25, 0.25)^T$ & $V$\\
    \hline
    $R_1^B$ & $(0.75, 0.75)^T$ & $V$\\
    \hline
    $R_\infty^A$ & $(-0.25, -0.25)^T$ & $V$\\
    \hline
    $R_\infty^B$ & $(-1.5, -1.5)^T$ & $V$\\
    \hline
    \end{tabular}
    \caption{Distributions used in Numerical Simulations}
    \label{table:distributions}
\end{center}
\end{table}

By Theorem~\ref{theorem:gaussdisjoint}, we know that $\mathcal{G}_\infty$ and $\mathcal{G}_1$ are disjoint.  Furthermore, by Theorem~\ref{theorem:gaussvnorm}, we can say that $R_\infty^A, R_1^A$ are the least favorable distributions over $\mathcal{G}_\infty, \mathcal{G}_1$ as it can be shown that their parameters are the nearest in $\mathbb{M}_\infty, \mathbb{M}_1$ under the $V$-norm.

Next, we demonstrate that Assumption~\ref{assumption:nearness} holds for any $P_\infty \in \mathcal{G}_\infty$.  Consider arbitrary Gaussian distributions with a common covariance matrix $V$.  If $C = \mathcal{N}(\mu_C, V)$ and $D = \mathcal{N}(\mu_D, V)$, then 
$$\mathbb{D}_{\texttt{\textup{F}}}(C \| D) = \rev{\frac{1}{2}}\| V^{-1}(\mu_C - \mu_D)\|^2.$$  
Let us further restrict ourselves to Gaussians whose means can be expressed as $s \mathbf{1}_2$, where $s \in \mathbb{R}$ and $\mathbf{1}_2 \in \mathbb{R}^{2}$ is a column vector of ones, and for any such Gaussian distribution $C$, let $s_C$ be the scalar constant such that $C = \mathcal{N}(s_C \mathbf{1}, V)$.  Then, we can write:

\begin{align}
    \mathbb{D}_{\texttt{\textup{F}}}(C \| D) = \rev{\frac{1}{2}} (s_C - s_D)^2 \| V^{-1} \mathbf{1}_2 \|^2 \propto (s_C - s_D)^2.
\end{align}

Clearly, $s_{R_\infty^A} = s_{Q_\infty}=  -0.25$, $s_{R_\infty^B} = -1.5$, and $s_{Q_1} = 0.25$.  Thus, we know that $\mathbb{D}_{\texttt{\textup{F}}}(R_\infty^A \| Q_\infty) < \mathbb{D}_{\texttt{\textup{F}}}(R_\infty^A \| Q_1)$ and $\mathbb{D}_{\texttt{\textup{F}}}(R_\infty^B \| Q_\infty) < \mathbb{D}_{\texttt{\textup{F}}}(R_\infty^B \| Q_1)$.  We know that $\mathcal{G}_\infty$ contains both Gaussians and Gaussian mixture models, but by Theorem~\ref{theorem:applicability},  Assumption~\ref{assumption:nearness} holds for all $P_\infty \in \mathcal{G}_\infty$.

To demonstrate the robustness of RSCUSUM, we first sample four different combinations of choices of $P_\infty \in \mathcal{G}_\infty, P_1 \in \mathcal{G}_1$ and show that the robust test consistently detects the change.  Then, we compare the performance of the robust test to the performance of a nonrobust test.

\begin{table}[H]
\begin{center}
    \begin{tabular}{|c|c|c|c|c|c|}
    \hline
    Trial & $P_\infty$ & $P_1$ & Algorithm &  \shortstack{ \strut Pre-Change\\  Drift} & \shortstack{\strut Post-Change \\ Drift} \\
    \hline 
    R-AA & $R_\infty^A$ & $R_1^A$ & Algorithm~\ref{algm:rscusum} & \rev{-0.0518} & \rev{0.0495} \\
    \hline
    R-AB & $R_\infty^A$ & $R_1^B$ & Algorithm~\ref{algm:rscusum} & \rev{-0.0511} & \rev{0.155} \\
    \hline
    R-BA & $R_\infty^B$ & $R_1^A$ & Algorithm~\ref{algm:rscusum} & \rev{-0.311} & \rev{0.0519} \\
    \hline
    R-BB & $R_\infty^B$ & $R_1^B$ & Algorithm~\ref{algm:rscusum} & \rev{-0.309} & \rev{0.157} \\
    \hline
    N & $R_\infty^A$ & $R_1^A$ & Algorithm~\ref{algm:nonrobust_scusum}  & \rev{0.124} & \rev{0.584}\\
    \hline
    \end{tabular}

\caption{The RSCUSUM Algorithm (Algorithm~\ref{algm:rscusum}) is simulated for multiple choices of $P_\infty, P_1$, and the Nonrobust-SCUSUM Algorithm (Algorithm~\ref{algm:nonrobust_scusum}) is simulated with $H_\infty = R_\infty^B, H_1 = R_1^B$.  Drifts are averaged over 50,000 sample paths. }
\label{table:gauss_drifts}
\end{center}
\end{table}

Trials R-AA, R-AB, R-BA, and R-BB are robust and use the least-favorable distributions in the calculation of their instantaneous detection scores, while trial N is a nonrobust test and uses non-least-favorable distributions (in this case, $H_\infty = R_\infty^B$ and $H_1 = R_1^B$) in the calculation of the instantaneous detection scores.  For each robust test, the pre-change drift $\mathbb{E}_\infty [z(X_1)]$ is negative and that the post-change drift $\mathbb{E}_1 [z(X_1)]$ is positive, consistent with Lemma~\ref{lemma:drifts}.  Furthermore, for the nonrobust test, the pre-change drift is positive, making a false alarm likely.

Next, we run each of the above trials to measure the mean time to false alarm and expected detection delay.
\begin{figure}[!t]
\begin{center}
\begin{subfigure}[b]{0.48\textwidth}
\includegraphics[width=8cm]{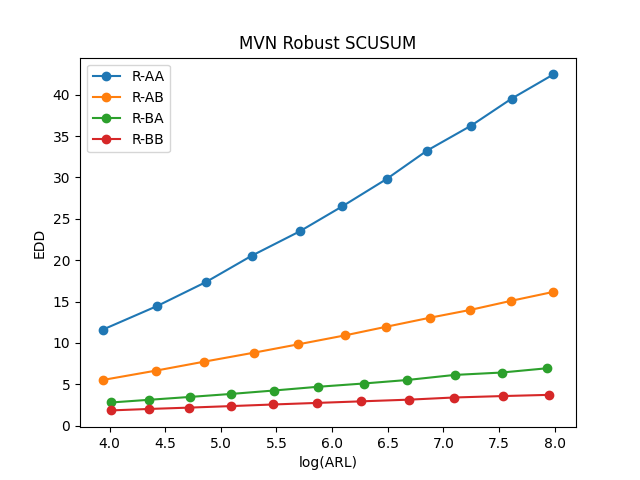}
\caption{\rev{Robust Tests}}
\end{subfigure}
\begin{subfigure}[b]{0.48\textwidth}
\includegraphics[width=8cm]{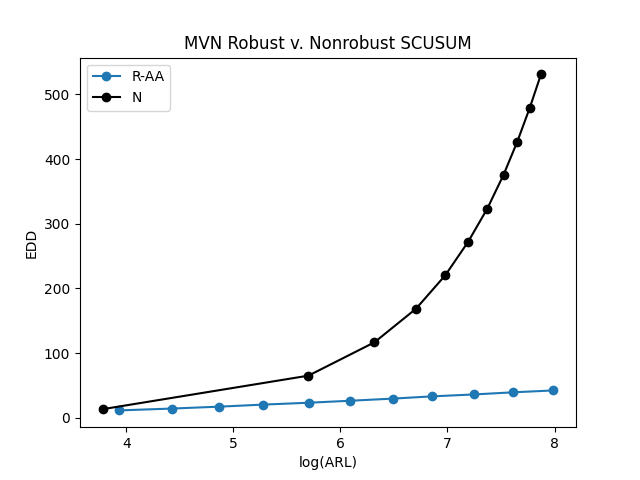}
\caption{\rev{Robust vs. Nonrobust}}
\end{subfigure}
\caption{ARL vs EDD curves for the trials of Table~\ref{table:gauss_drifts}, averaged over 10,000 sample paths.  (a) The RSCUSUM Algorithm (Algorithm~\ref{algm:rscusum}) is simulated with various choices of $P_\infty, P_1$; (b) The Nonrobust-SCUSUM algorithm (Algorithm~\ref{algm:nonrobust_scusum}) is simulated with $H_\infty = R_\infty^B, H_1 = R_1^B$ and is compared against a robust test from part (a).}
\label{fig:gauss_results}
\end{center}
\end{figure}
Figure~\ref{fig:gauss_results}(a) demonstrates that the RSCUSUM Algorithm detects the change-point for many choices of $P_\infty \in \mathcal{G}_\infty, P_1 \in \mathcal{G}_1$.  Consistent with Theorems~\ref{thm:arl} and \ref{thm:cond_edd}, the EDD increases at most linearly with respect to a bound on log-ARL when ARL becomes arbitrarily large.  The asymptotically linear relationship between ARL and EDD is consistent with that of the SCUSUM Algorithm when $P_\infty, P_1$ are known precisely.
Conversely, Figure~\ref{fig:gauss_results}(b) demonstrates that for a particular Nonrobust Algorithm applied to these uncertainty sets $\mathcal{G}_\infty, \mathcal{G}_1$, the EDD increases exponentially with respect to a bound on log-ARL.

\begin{figure*}[!t]
\centering
\begin{subfigure}[b]{0.32\textwidth}
\centering
\includegraphics[width=\textwidth]{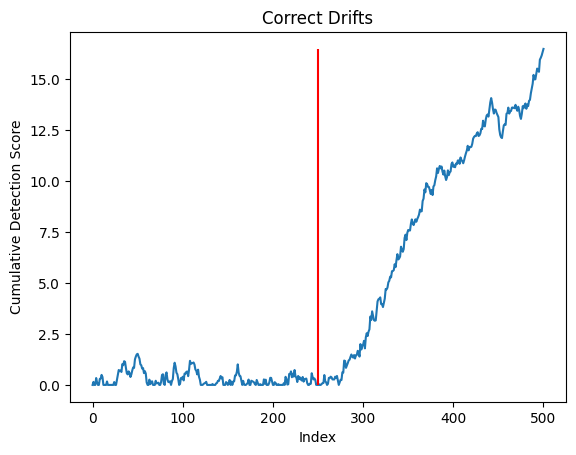}
\caption{\rev{Correct Drifts}}
\end{subfigure}
\begin{subfigure}[b]{0.32\textwidth}
\centering
\includegraphics[width=\textwidth]{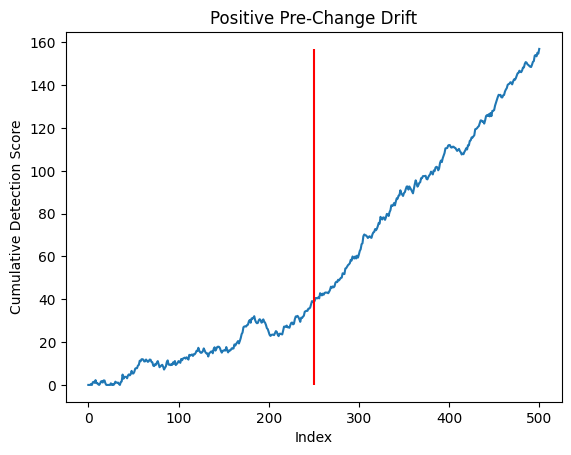}
\caption{\rev{Positive Pre-change Drift}}
\end{subfigure}
\begin{subfigure}[b]{0.32\textwidth}
\centering
\includegraphics[width=\textwidth]{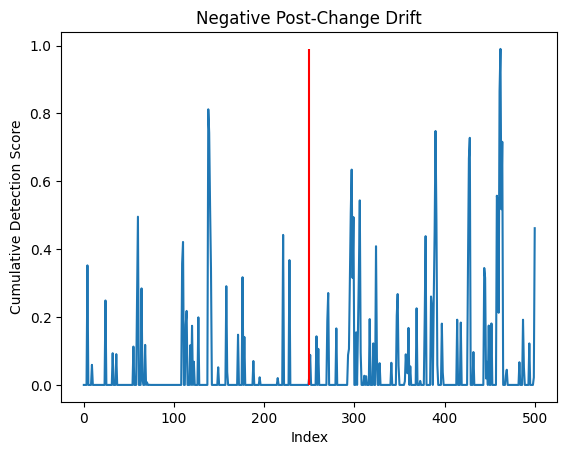}
\caption{\rev{Negative Post-change Drift}}
\end{subfigure}
\caption{Select cumulative detection score paths.  In each sample path, a change-point occurs at index 250, illustrated by a vertical red line. (a) A sample path from trial R-AA. (b) A sample path from trial N (Algorithm~\ref{algm:nonrobust_scusum} with $H_\infty = R_\infty^B, H_1 = R_1^B$).  (c) A sample path from a Nonrobust algorithm with $H_\infty = \mathcal{N}((-0.75,-0.75)^T, V)$, $H_1 = \mathcal{N}((1.5,1.5)^T, V)$, $P_\infty = R_\infty^A, P_1 = R_1^A$.  Note that this sample path is not from the experiment of Table~\ref{table:gauss_drifts}.}
\label{fig:paths}
\end{figure*}

As shown in Lemma~\ref{lemma:drifts}, the RSCUSUM Algorithm features negative pre-change drifts and positive post-change drifts.  A sample path with these drifts is illustrated in Figure~\ref{fig:paths}(a). 
The negative pre-change drift is essential to prevent false alarms; when the pre-change drift is positive (as is the case for the sample path in Figure~\ref{fig:paths}(b) above), a very large detection threshold $\omega$ must be set in order to achieve a lengthy mean time to false alarm.  Conversely, the positive post-change drifts reduce the detection delay; a negative post-change drift makes the detection delay very lengthy (as is the case for Figure~\ref{fig:paths}(c)).

\subsection{Gauss-Bernoulli Restricted Boltzmann Machine}

The Gauss-Bernoulli Restricted Boltzmann Machine (GBRBM) \cite{gbrbm_without_tears} is a model class that defines a probability density function over $\mathbb{R}^v$ 
and utilizes an $h$-dimensional latent vector.  Parameterized by 
$W \in \mathbb{R}^{v \times h}, b \in \mathbb{R}^v$, 
and $c \in \mathbb{R}^h$, the probability density function can be expressed as:
\begin{equation}
    p(x) = \frac{1}{Z}\exp(-E(x))
\end{equation}
where 
\begin{equation}
    E(x) = \frac{1}{2} \| x- b\|^2 - \bigg(\text{Softplus} (W^T x + c)\bigg) \mathbf{1}_h.
\end{equation}
Here, $\text{Softplus}(x) = \log(1 + \exp(x))$ and $\mathbf{1}_h \in \mathbb{R}^h$ is a column vector of ones.  Note that in this simulation, we assume the $\sigma$ parameter of \cite{gbrbm_without_tears} to be equal to one. 
 Further: 
\begin{equation}
    Z = \int \exp(-E(x)) dx.
\end{equation}
Calculation of $Z$ is intractable for large $v, h$, so we instead leverage the unnormalized probability density function given by 
\begin{equation}
    \tilde{p}(x) = \exp(-E(x)).
\end{equation}
The score function $\nabla_x \log p(x)$ can be expressed in closed-form in terms of $W, b, c$:
\begin{equation}
\nabla_x \log \tilde{p}(x) = b-x + W\text{Sigmoid}(W^T x + c)
\end{equation}
where $\text{Sigmoid}(x) = \frac{1}{1+\exp(-x)}$.

We next demonstrate the robustness of RSCUSUM with Gauss-Bernoulli Restricted Boltzmann Machines (GBRBMs).  For these GBRBMs, a single weight matrix $W^* \in \mathbb{R}^{10\times 8}$ and two bias vectors $v^* \in \mathbb{R}^{10}, h^* \in \mathbb{R}^{8}$ are generated by setting their elements equal to draws from a scalar standard normal distribution.  Several scalar adjustments are added to the weight matrix element-wise in order to construct several distinct GBRBMs.  

\begin{table}[H]
\begin{center}
    \begin{tabular}{|c|c|c|c|}
    \hline
    Distribution & Weight Matrix & Visible Bias & Hidden Bias \\
    \hline
    $G_\infty^0$ & $W^* - 0.2$ & $v^*$ & $h^*$ \\
    \hline
    $G_\infty^1$ & $W^* - 0.05$ & $v^*$ & $h^*$ \\
    \hline
    $G_1^0$ & $W^* $ & $v^*$ & $h^*$ \\
    \hline
    $G_1^1$ & $W^* + 0.05$ & $v^*$ & $h^*$ \\
    \hline
    \end{tabular}
    \caption{Parameters of Distributions used in Numerical Simulations.  Addition between matrices and scalars is element-wise.}
    \label{table:rbms}
\end{center}

\end{table}

These distinct GBRBMs define a finite basis for our two uncertainty classes.  Specifically, we define  $\mathcal{G}_\infty$ to be the convex hull of $\{ G_\infty^0, G_\infty^1\}$ and further define $\mathcal{G}_1$ to be the convex hull of $\{G_1^0, G_1^1 \}$:

\begin{equation}
    \mathcal{G}_\infty = \bigg\{\sum_{i=1}^m \alpha_i G_\infty^i \,: \sum_{i=0}^1 \alpha_i = 1, \alpha_i \geq 0 \bigg\},
\end{equation}
\begin{equation}
    \mathcal{G}_1 = \bigg\{\sum_{i=1}^m \alpha_i G_1^i \,: \sum_{i=0}^1 \alpha_i = 1, \alpha_i \geq 0 \bigg\}.
\end{equation}
As each of the GBRBMs of Table~\ref{table:rbms} are linearly independent functions of $x$ when the weight matrices are distinct, the sets $\mathcal{G}_\infty, \mathcal{G}_1$ are disjoint.

Here the pre-and post-change uncertainty classes $\mathcal{G}_\infty, \mathcal{G}_1$ are constructed from finite bases (see Equation~(\ref{eq:convex_post_family})). 
Using Theorem~\ref{thm_general_LFD} and following the notation of Equation~\ref{eq:convex_gradient_density}, we endeavor to learn the functions  $\beta_\infty(x), \beta_1(x)$ that correspond to the least-favorable distributions among $\mathcal{G}_\infty, \mathcal{G}_1$:

\begin{equation}
\label{eq:beta_inf}
    \nabla_x \log q_\infty (x) =  \sum_{i=0}^1 \beta_\infty^i (x) \nabla_x \log  g_\infty^i(x), 
\end{equation}
where $\sum_{i=1}^m \beta_\infty^i(x)=1,\; \beta_\infty^i(x) \geq 0$, and
\begin{equation}
\label{eq:beta_1}
    \nabla_x \log q_1 (x) =  \sum_{i=0}^1 \beta_1^i (x) \nabla_x \log  g_1^i(x),
\end{equation}
where $\sum_{i=1}^m \beta_1^i(x)=1,\; \beta_1^i(x) \geq 0$.  We use a neural network $\operatorname{Softmax}_j\circ f_\infty(x)$ to estimate $\beta_\infty^j(\cdot)$, specifically, 
\begin{align*}
    \beta_\infty^j(x) = \operatorname{Softmax}_j\circ f_\infty(x),
\end{align*}
where $\operatorname{Softmax}_j$ denotes the $j$-th element of the Softmax function and where $f_\infty$ is given by a multi-layer perceptron (MLP) network.  The architecture of this MLP is an input layer of dimension $v=10$, a single hidden layer of dimension $5$, and an output layer of dimension $2$.  This MLP utilizes $\operatorname{ReLU}$ activation functions in hidden layers.
The use of Softmax function ensures $\sum_{i=1}^m \beta_\infty^i (x) = 1$ and $\beta_\infty^i(x) \ge 0$ for all $0 \le i \le 1$.  We further create a separate neural network $f_1(x)$ with the same architecture as that of $f_\infty$, also with $\operatorname{Softmax}$ at the output layer, to learn the functions $\beta_1^i$.

To learn the scores of $Q_\infty, Q_1$, we train $f_\infty(\cdot), f_1(\cdot)$.  During training, a set of \resp{$N=100,000$} particles $X_1, ..., X_N$ are initially sampled from an arbitrary distribution of the post-change set $\mathcal{G}_1$ using \rev{Gibbs sampling \cite{gbrbm_without_tears}}.  The network is then trained to minimize the following loss function:

\begin{equation}
\label{eq:loss}
    \frac{1}{N} \sum_{i=1}^N \bigg\|\sum_{j=0}^1 \beta_1^j(X_i)\nabla \log g_1^j(X_i) - \sum_{k=0}^1 \beta_\infty^k (X_i) \nabla g_\infty^k(X_i) \bigg\|_2^2.
\end{equation}

After each epoch of training, the $N$ particles are repeatedly and iteratively updated via Langevin dynamics \cite{gbrbm_without_tears} (without Metropolis adjustment) $K$ times:

\begin{equation}
\label{eq:langevin_update}
    X_i \leftarrow X_i + \epsilon \sum_{j=0}^1 \beta_1^j(X_i) \nabla \log g_1^j(X_i) + \sqrt{2\epsilon} t
\end{equation}
where $t$ is a noise term sampled from $\mathcal{N}(0, I_v)$ and where $\epsilon$ is a step size constant so that $X_1, ..., X_N$ remain samples of the distribution with score $\sum_{j=0}^1 \beta_1^j(\cdot) \nabla \log g_1^j(\cdot)$ even as $\beta_1$ is updated.  In this experiment, we let $K=1000, \epsilon=0.01$.

In Table~\ref{tab:lfd_coeffs}, we report the average value $\frac{1}{N}\sum_{i=1}^N\beta_\infty^j(X_i)$ and $\frac{1}{N}\sum_{i=1}^N\beta_1^j(X_i)$ over the test samples $X_1, \cdots X_N$ generated using the Langevin update step of Equation~\ref{eq:langevin_update}.  In all cases the average value of $\beta_\infty^j(x), \beta_1^j(x)$ are very close to either $1$ or $0$. This gives strong evidence that the LFD is achieved by $Q_\infty = G_\infty^1, Q_1 = G_1^0$.

\begin{table}[H]
    \centering
    \begin{tabular}{|c|c|c|}
    \hline 
         j & $0$&$1$\\
         \hline
         $\beta_\infty^j$ & \rev{2.47e-6}& \rev{1.00} \\
         \hline
         $\beta_1^j$ & \rev{1.00}& \rev{1.13e-7} \\
        \hline
    \end{tabular}
    \caption{Empirical average values of $\beta_\infty^j(X), \beta_1^j(X)$ over $N=1000$ test samples.}
    \label{tab:lfd_coeffs}
\end{table}

To proceed with Algorithm~\ref{algm:rscusum}, we need a method of calculating the Hyv\"arinen scores $\mathcal{S}_{\texttt{\textup{H}}}(\cdot, Q_\infty), \mathcal{S}_{\texttt{\textup{H}}}(\cdot, Q_1)$.  These Hyv\"arinen scores are the sum of a gradient log-density, defined in Equations~\ref{eq:beta_inf} and \ref{eq:beta_1}, along with a Laplacian term.  To estimate the Laplacian term of a mixture of GBRBM distributions, we use Hutchinson's Trick (\cite{hutchinson, wu2022score}):
\begin{align*}
\label{eq:hutchinson}
    \Delta_x \log q_\infty(x) &= \mathbb{E}_{\epsilon \sim \mathcal{N}(0, I_v)}[\epsilon^T   \nabla_x f(x)   \epsilon] = \mathbb{E}_{\epsilon \sim \mathcal{N}(0, I_v)}[\epsilon^T   \nabla_x ( \epsilon^T f(x))] \eqnum
\end{align*}
with $f(x) = \sum_{i=1}^m \beta_\infty^i(x) \nabla_x \log g_\infty^i(x)$.  In this simulation, we average over ten vectors $\epsilon$ sampled from $\mathcal{N}(0,I_v)$ in order to estimate the expectation in Equation~\ref{eq:hutchinson}.
We estimate $\Delta_x \log q_1(x)$ in a similar way.

Unlike the Gaussian Mixture Model case, there is not a convenient analytical method to verify that Assumption~\ref{assumption:nearness} holds. Thus, we verify it numerically: we sample $50,000$ samples from two choices of $P_\infty$ and estimate Fisher divergences:
\begin{table}[H]
\centering

    \begin{tabular}{|c|c|c|}
    \hline
    $P_\infty$ & $\mathbb{D}_{\texttt{\textup{F}}}(P_\infty \| Q_\infty)$ & $\mathbb{D}_{\texttt{\textup{F}}}(P_\infty \| Q_1)$ \\
    \hline
    $G_\infty^0$ & \rev{4.35} & \rev{6.93} \\
    \hline 
    $G_\infty^1$ & \rev{1.93e-10} & \rev{0.336} \\
    \hline
    \end{tabular}

\caption{Numerical estimation of the Fisher distances involved in Assumption~\ref{assumption:nearness}.  For each $P_\infty$ chosen, Assumption~\ref{assumption:nearness} holds.  Fisher Divergences are estimated by averaging over 50,000 samples.}
\label{table:condition_rbm}
\end{table}

We can see that $\mathbb{D}_{\texttt{\textup{F}}}(G_\infty^1 \| Q_\infty)$ is very close to zero, and this gives further evidence that the LFD $Q_\infty = G_\infty^1$. By Theorem~\ref{theorem:applicability}, Assumption~\ref{assumption:nearness} holds for all $P_\infty \in \mathcal{G}_\infty$.

As with the Gaussian Mixture Model case, we first sample four different combinations of choices of $P_\infty \in \mathcal{G}_\infty, P_1 \in \mathcal{G}_1$ and show that the robust test consistently detects the change.  Then, we compare the performance of the robust test to the performance of a nonrobust test.

\begin{table}[ht]
\begin{center}
    \begin{tabular}{|c|c|c|c|c|c|}
    \hline
    Trial & $P_\infty$ & $P_1$ & Algorithm & \shortstack{\strut Pre-Change\\ Drift} & \shortstack{\strut Post-Change\\ Drift} \\
    \hline 
    R-00 & $G_\infty^0$ & $G_1^0$ & Algorithm~\ref{algm:rscusum}& \resp{-2.59} & \resp{0.362} \\
    \hline
    R-01 & $G_\infty^0$ & $G_1^1$ & Algorithm~\ref{algm:rscusum}& \resp{-2.56} & \resp{1.19} \\
    \hline
    R-10 & $G_\infty^1$ & $G_1^0$ & Algorithm~\ref{algm:rscusum}& \resp{-0.324} & \resp{0.357} \\
    \hline
    R-11 & $G_\infty^1$ & $G_1^1$ & Algorithm~\ref{algm:rscusum}&\resp{-0.337} & \resp{1.18} \\
    \hline
    N & $G_\infty^1$ & $G_1^0$ & Algorithm~\ref{algm:nonrobust_scusum}& \rev{2.10} & \rev{5.81}\\
    \hline
    \end{tabular}

\caption{The RSCUSUM Algorithm (Algorithm~\ref{algm:rscusum}) is simulated for multiple choices of $P_\infty, P_1$, and the Nonrobust-SCUSUM Algorithm (Algorithm~\ref{algm:nonrobust_scusum}) is simulated with $H_\infty = G_\infty^0, H_1 = G_1^1$.  Drifts are averaged over 50,000 sample paths. }
\label{table:rbm_drifts}
\end{center}
\end{table}

Trials R-00, R-01, R-10, and R-11 are robust and use the least-favorable distributions in the calculation of their instantaneous detection scores, while trial N is a nonrobust test and uses non-least-favorable distributions in the calculation of the instantaneous detection scores.  As before, the robust drifts are negative before the change point and positive after the change point, but the nonrobust drifts are positive before the change point.

Next, we vary the detection threshold $\omega$ and generate plots comparing mean time to false alarm with the expected detection delay.
\begin{figure}[!t]
\begin{center}
\begin{subfigure}[b]{0.48\textwidth}
\includegraphics[width=8cm]{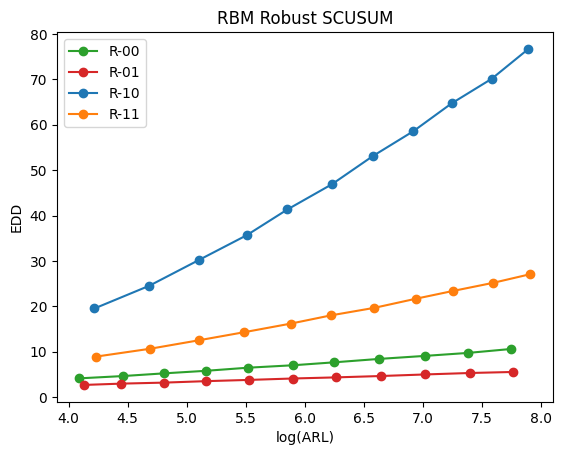}
\caption{\rev{Robust Tests}}
\end{subfigure}
\begin{subfigure}[b]{0.48\textwidth}
\includegraphics[width=8cm]{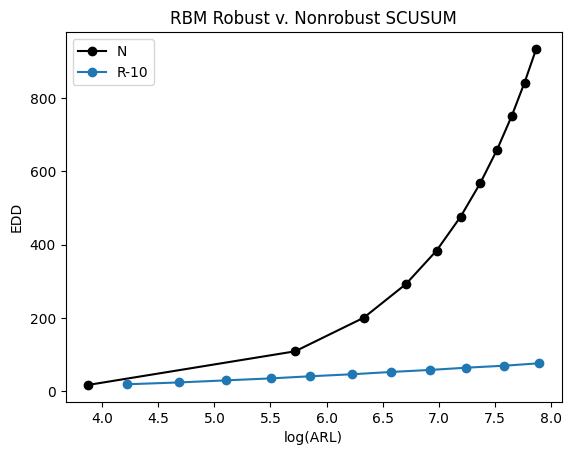}
\caption{\rev{Robust vs. Nonrobust}}
\end{subfigure}
\caption{ARL vs EDD curves for the trials of Table~\ref{table:rbm_drifts}, averaged over 10,000 sample paths.  (a) The RSCUSUM Algorithm (Algorithm~\ref{algm:rscusum}) is simulated with various choices of $P_\infty, P_1$. (b) The Nonrobust-SCUSUM algorithm (Algorithm~\ref{algm:nonrobust_scusum}) is simulated with $H_\infty = G_\infty^0, H_1 = G_1^1$ and is compared against a robust test from part (a).  }
\label{fig:rbm_results}
\end{center}
\end{figure}
Figure~\ref{fig:rbm_results}(a) demonstrates that the RSCUSUM Algorithm detects the change-point for many choices of $P_\infty \in \mathcal{G}_\infty, P_1 \in \mathcal{G}_1$.  Again, the log-ARL vs. EDD plot corroborates Theorems~\ref{thm:arl} and \ref{thm:cond_edd} as EDD increases at most linearly with log-ARL (as is the case for the SCUSUM Algorithm with precisely known $P_\infty, P_1$).  As before, a particular Nonrobust Algorithm applied to these uncertainty sets $\mathcal{G}_\infty, \mathcal{G}_1$ produces an EDD that is exponential in log-ARL (Figure~\ref{fig:rbm_results}(b)).

\section{Conclusion}
In this work, we proposed the RSCUSUM algorithm\resp{, which can detect a change in the distribution of an online high-dimensional data stream in real time.  While the SCUSUM algorithm  could accomplish this by knowing only the score function of the pre- and post-change distributions, the new RSCUSUM method can accomplish this with the additional challenge of imprecisely knowing the pre- and post-change distribution score functions.}

The RSCUSUM algorithm is defined using a novel notion of least-favorable distributions obtained using the Fisher divergence between pre- and post-change distributions. We also analyzed the average detection delay and the mean time to a false alarm of the algorithm using novel analytical techniques. 
We provided both theoretical and algorithmic methods for computing the least favorable distributions for unnormalized models\resp{, and provided an analytical solution in the case where the pre- and post- change models are specific Gaussian Mixture Models.} Numerical simulations were provided to demonstrate the performance \resp{and implementability} of our robust algorithm.

\appendices

\section{Selection of Appropriate Multiplier}

The following proof and subsequent discussion use similar arguments as the proof of Lemma 2 of \cite{wu_IT_2024}.  We note that the proofs are similar but not identical as the pre-change distribution $P_\infty$ may not be equal to the least favorable distribution $Q_\infty$.

\begin{proof}[Proof of Lemma~\ref{lemma: lambda}]
Define the function $\rho:\mapsto h(\rho)$ given by $$h(\rho)\de\mathbb{E}_\infty[\exp (z_{\rho}(X_1))]-1.$$ where $z_\rho(x)$ is defined as in \eqref{eq:modified_det}.  Observe that 
\begin{equation*}
  h^{\prime}(\rho)\de \frac{d h}{d\rho}(\rho)\\
  =\mathbb{E}_\infty[(S_{\texttt{H}}(X_1,Q_{\infty})-S_{\texttt{H}}(X_1,Q_{1}))\exp (z_{\rho}(X_1))].
\end{equation*}
Note that $h(0)=0$, and $h^{\prime}(0)=  \mathbb{D}_{\texttt{\textup{F}}}(P_\infty \| Q_\infty) - \mathbb{D}_{\texttt{\textup{F}}}(P_\infty \| Q_1) < 0$ by Assumption~\ref{assumption:nearness}. 

Next, we prove that either 1) there exists $\rho^{\star} \in (0,\infty)$ such that $h(\rho^{\star}) = 0$, or 2) for all $\rho>0$ we have $h(\rho)<0$. 

Observe that
\begin{equation*}
    h''(\rho) \de\frac{d^2 h}{d \rho}(\rho) \\
    =\mathbb{E}_\infty[(S_{\texttt{H}}(X_1,Q_{\infty})-S_{\texttt{H}}(X_1,Q_{1}))^2\exp (z_{\rho}(X_1))]\geq 0.
\end{equation*}
We claim that $h(\rho)$ is {strictly convex}, namely $h''(\rho) > 0$ for all $\rho\in [0,\infty)$. Suppose $h''(\rho) = 0$ for some $\rho \geq 0$, we must have $S_{\texttt{H}}(X_1,Q_{\infty})-S_{\texttt{H}}(X_1,Q_1) = 0$ almost surely.  This implies that 
$$\mathbb{E}_\infty[(S_{\texttt{H}}(X_1,Q_{\infty})-S_{\texttt{H}}(X_1,Q_1))]
= 0$$ 
which in turn gives 
$$\mathbb{D}_{\texttt{\textup{F}}}(P_\infty \| Q_\infty) - \mathbb{D}_{\texttt{\textup{F}}}(P_\infty \| Q_1) =0,$$
violating Assumption~\ref{assumption:nearness}. Thus, $h(\rho)$ is {strictly convex} and $h^{\prime}(\rho)$ is {strictly increasing}. 

Here, we recognize two cases: either 1) $h(\rho)$ have at most one global minimum in $(0, \infty)$, or 2) it is strictly decreasing in $[0,\infty)$. We will show that the second case is degenerate that is of no practical interest.

\textit{Case 1:} If the global minimum of $h(\rho)$ is attained at $a \in (0, \infty)$, then $h^{\prime}(a) = 0$. Since $h^{\prime}(0) < 0$ and $h(0) = 0$, the global minimum $h(a)<0$. 
Since $h^{\prime}(\rho)$ is \textit{strictly increasing}, we can choose $b > a$ and conclude that $h^{\prime}(\rho) > h^{\prime}(b) > h^{\prime}(a) = 0$ for all $\rho > b$. It follows that $\lim_{\rho \rightarrow \infty} h(\rho) = +\infty$. Combining this with the continuity of $h(\rho)$, we conclude that $h(\rho^*) = 0$ for some $\rho^* \in (0, \infty)$ and any value of $\rho \in (0, \rho^*]$ satisfies Inequality~(10).

Note that in this case, we must have 
$$P_{\infty}\left(S_{\texttt{H}}(X_1,Q_{\infty})-S_{\texttt{H}}(X_1,Q_1) \ge c \right) > 0$$
for some $c>0$. Otherwise, we have 
$$P_{\infty}\left(S_{\texttt{H}}(X_1,Q_{\infty})-S_{\texttt{H}}(X_1,Q_1) \le 0\right)=1.$$
This implies that $P_{\infty}(z_{\rho}(X_1)\le 0)=1$, or equivalently $\mathbb{E}_\infty[\exp (z_{\rho}(X_1))]< 1$ for all $\rho > 0$, and therefore leads to Case 2: $h(\rho)< 0$ for all $\rho > 0$. 
Here, 
$\mathbb{E}_\infty[\exp (z_{\rho}(X_1))]\neq 1$
since 
$$P_{\infty}(S_{\texttt{H}}(X_1,Q_{\infty})-S_{\texttt{H}}(X_1,Q_1)=0)<1;$$ 
otherwise 
$$P_{\infty}(S_{\texttt{H}}(X_1,Q_{\infty})-S_{\texttt{H}}(X_1,Q_1)=0)=1,$$
and then 
\begin{equation*}
\mathbb{E}_\infty  [S_{\texttt{H}}(X_1,Q_{\infty})-S_{\texttt{H}}(X_1,Q_1)] \\
=\mathbb{D}_{\texttt{\textup{F}}}(P_\infty \| Q_\infty) - \mathbb{D}_{\texttt{\textup{F}}}(P_\infty \| Q_1)=0,
\end{equation*}
causing the same contradiction to Assumption~\ref{assumption:nearness}.

\textit{Case 2:} If $h(\rho)$ is strictly decreasing in $(0, \infty)$, then any $\rho \in (0, \infty)$ satisfies Inequality~(10). As discussed before, in this case, we must have $$P_{\infty}\left(S_{\texttt{H}}(X_1, Q_{\infty})-S_{\texttt{H}}(X_1, Q_1) \le 0\right)=1.$$
Equivalently, all the increments of the RSCUSUM detection score are non-positive under the pre-change distribution, and $P_{\infty}(Z(n)=0)=1$ for all $n$. Accordingly, $\mathbb{E}_\infty[T_{\textit{RSCUSUM}}]=+\infty$. When there occurs change (under measure $P_1$), we also observe that RSCUSUM can get close to detecting the change-point instantaneously as $\rho$ is chosen arbitrarily large. Obviously, this case is of no practical interest.

\end{proof}

\section{Proofs of Delay and False Alarm Theorems}
\label{section:proofs}
The theoretical analysis for the mean time to a false alarm is analogous to the analysis from \cite{wu_IT_2024}. We note that the proofs are similar but not identical as the pre-change distribution $P_\infty$ may not be equal to the least favorable distribution $Q_\infty$.


\subsection{Proof of Theorem~\ref{thm:arl}}
\begin{proof}
We prove the result in multiple steps and follow the proof technique of \cite{wu_IT_2024}, which in turn uses the proof technique given in  \cite{lai1998information}. 


\textit{Part 1:}
Define 
\begin{align*}
    Z_\rho(0)=0, \quad Z_\rho(n) \de (Z_\rho(n-1)+z_\rho(X_1))^{+},\;\forall n\geq 1,
\end{align*}
where 
\begin{equation}
\label{eq:scusum_instantaneous_111}
    z_\rho(x) = \rho \left(\mathcal{S}_{\texttt{\textup{H}}}(x, Q_{\infty})-\mathcal{S}_{\texttt{\textup{H}}}(x, Q_{1})\right),
\end{equation}
where $\mathcal{S}_{\texttt{\textup{H}}}(x, Q_{\infty})$ and $\mathcal{S}_{\texttt{\textup{H}}}(x, Q_1)$ are respectively the Hyv\"arinen score functions of $Q_\infty$ and $Q_1$, the LFD pair. Also, note that $\rho$ satisfies 
\begin{equation}
\label{eq:condition_111}    
\mathbb{E}_\infty[\exp(z_{\rho}(X_1))]\leq 1,
\end{equation}
Note that 
\begin{equation}
\label{eq:modified}
T_{\texttt{\textup{RSCUSUM}}}=\inf\{n\geq 1:  Z_\rho(n)\geq \omega\rho \}.
\end{equation}
For convenience, define $\omega' \de \omega \rho$.

\textit{Part 2:} We next construct a non-negative martingale with mean $1$ under the measure $P_{\infty}$. Define a new instantaneous score function $x \mapsto \tilde{z}_{\rho}(x)$ given by 
\begin{equation*}
    \label{eq:new_instant_z_lambad}
    \tilde{z}_{\rho}(x)\de z_{\rho}(x)+\delta,
\end{equation*}
where $$\delta \de -\log \biggr(\mathbb{E}_\infty\left[\exp (z_{\rho}(X_1))\right]\biggr).$$ Further define the sequence $$\tilde{G}_n\de \exp \biggr(\sum_{k=1}^n\tilde{z}_{\rho}(X_k)\biggr),\; \forall n\geq 1.$$ 
Since $\mathbb{E}_\infty\left[\exp (z_{\rho}(X_1))\right] \leq 1$, $\delta \geq 0$. 

Suppose $X_1, X_2, \ldots$ are i.i.d according to $P_{\infty}$ (no change occurs). Then,
\begin{align*}
\mathbb{E}_\infty\left[\tilde{G}_{n+1}\mid \mathcal{F}_n\right] 
&= \tilde{G}_n\mathbb{E}_\infty[\exp(\tilde{z}_{\rho}(X_{n+1}))] =\tilde{G}_{n}e^{\delta}\mathbb{E}_\infty[\exp(z_{\rho}(X_{n+1}))]=\tilde{G}_{n},
\end{align*}
and
\begin{align*}
\label{eq:martingalemeanone}
    \mathbb{E}_\infty[\tilde{G}_n] &= \mathbb{E}_\infty\left[\exp\left(\sum_{i=1}^{n}(z_{\rho}(X_i)+\delta)\right)\right] = e^{n\delta} \prod_{i=1}^n\mathbb{E}_\infty[\exp(z_{\rho}(X_i))]=1. \eqnum
\end{align*}
Thus, under the measure $P_{\infty}$, $\{\tilde{G}_n\}_{n\geq 1}$ is a non-negative martingale with the mean $\mathbb{E}_\infty[\tilde{G}_1]=1$.

\textit{Part 3:} We next examine the new stopping rule 
\begin{equation*}
    \tilde{T}_{\texttt{\textup{RSCUSUM}}} = \inf \left\{n\geq 1: \max_{1\leq k\leq n} \sum_{i=k}^n \tilde{z}_{\rho}(X_i)\geq \omega' \right\},
\end{equation*}
where $\tilde{z}_{\rho}(X_i) = z_{\rho}(X_i)+\delta$. Since $\delta \geq 0$, we have 
$$
T_{\texttt{\textup{RSCUSUM}}} \geq \tilde{T}_{\texttt{\textup{RSCUSUM}}}, 
$$ 
and 
$\mathbb{E}_\infty[T_{\texttt{\textup{RSCUSUM}}}]\geq \mathbb{E}_\infty[\tilde{T}_{\texttt{RSCUSUM}}]$. Thus, it is sufficient to lower bound the modified stopping time $\tilde{T}_{\texttt{RSCUSUM}}$.

\textit{Part 4:} By Jensen's inequality,
\begin{equation}
\label{eq:jensen}
    \mathbb{E}_\infty[\exp(z_{\rho}(X_1))]\geq \exp\left(\mathbb{E}_\infty[z_{\rho}(X_1)]\right),
\end{equation}
with equality holds if and only if $z_{\rho}(X_1)=c$ almost surely, where $c$ is some constant. Suppose the equality of Equation~(\ref{eq:jensen}) holds, then
\begin{equation*}
     \rho (\mathbb{D}_{\texttt{\textup{F}}}(P_\infty \| Q_\infty) - \mathbb{D}_{\texttt{\textup{F}}}( P_\infty  \| Q_1) ) =\mathbb{E}_\infty[z_{\rho}(X_1)] \\
    =c 
    =\mathbb{E}_{1}[z_{\rho}(X_1)]
    \geq \rho\mathbb{D}_{\texttt{\textup{F}}}(Q_1 \| Q_\infty)
\end{equation*} 

Recalling Assumption~\ref{assumption:nearness} and dividing through by $\rho > 0$, we know that $0 > \mathbb{D}_{\texttt{\textup{F}}}(P_\infty \| Q_\infty) - \mathbb{D}_{\texttt{\textup{F}}}(P_\infty \| Q_1) \geq \mathbb{D}_{\texttt{\textup{F}}}(Q_1 \| Q_\infty) \geq 0$, the last inequality a result of the fact that the Fisher distance is nonnegative.  Thus, the inequality of Equation~(\ref{eq:jensen}) is {strict}.  Continuing from Equation~\ref{eq:martingalemeanone}:
\begin{align*}
1=&e^{n\delta} \prod_{i=1}^n \mathbb{E}_\infty [\exp(z_\rho(X_i))] \\
e^{-n\delta} =& \bigg(\mathbb{E}_\infty [\exp(z_\rho(X_1))]\bigg)^n \\
e^{-\delta} =&\mathbb{E}_\infty [\exp(z_\rho(X_1))]  > \exp(\mathbb{E}_\infty (z_\rho(X_1))),
\end{align*}
where the strictness of the final inequality follows from the strictness of the inequality in Equation~(\ref{eq:jensen}).  Taking the log, we have that $\delta < -\rho(\mathbb{D}_{\texttt{\textup{F}}}(P_\infty \| Q_\infty) - \mathbb{D}_{\texttt{\textup{F}}}(P_\infty \| Q_1))$, so 
\begin{align*}
\mathbb{E}_\infty[\tilde{z}_\rho(X_i)] &= \mathbb{E}_\infty[z_\rho(X_i)] + \delta < \rho(\mathbb{D}_{\texttt{\textup{F}}}(P_\infty \| Q_\infty) - \mathbb{D}_{\texttt{\textup{F}}}(P_\infty \| Q_1))   -\rho(\mathbb{D}_{\texttt{\textup{F}}}(P_\infty \| Q_\infty) - \mathbb{D}_{\texttt{\textup{F}}}(P_\infty \| Q_1)) = 0
\end{align*}
so we know that $\tilde{z}_\rho(X_i)$ has a negative drift under the $P_\infty$ law.  Thus, $\tilde{T}_{\texttt{RSCUSUM}}$ is not trivial.

\textit{Part 5:} Define a sequence of stopping times: 
\begin{align*}
    \eta_0 = 0, \quad \quad
    \eta_1 = \inf \left\{t:\sum_{i=1}^t \tilde{z}_{\rho}(X_i)<0\right\}, \quad \quad
    \eta_{k+1} = \inf \left\{t>\eta_k:\sum_{i=\eta_k+1}^t \tilde{z}_{\rho}(X_i)<0\right\}, \; \text{for}\;  k\geq 1.
\end{align*}
Furthermore, 
let
\begin{equation}
\label{eq:defm}
    M \de \inf \biggl\{k\geq 0: \eta_k<\infty\\
    \;\text{and} \; \sum_{i=\eta_k+1}^n\tilde{z}_{\rho}(X_i)\geq \omega' \; \text{for some}\; n>\eta_k\biggr\}.
\end{equation}
Then, it follows from the definitions that
\begin{align*}
    \tilde{T}_{\texttt{RSCUSUM}}\geq M.
\end{align*}
Hence, it is enough to lower bound $\mathbb{E}_\infty[M]$. To this end, we obtain a lower bound on the probability $P_{\infty}(M> k)$ and then utilize the identity $\mathbb{E}_\infty[M] = \sum_{k=0}^{\infty}P_{\infty}(M> k)$.

\textit{Part 6:} For the probability $P_{\infty}(M> k)$, note that since the event $\{M < k\}$ is $\mathcal{F}_{\eta_k}-$measurable, 
\begin{align*}
\label{eq:eq1}
    P_{\infty}(M> k) = \mathbb{E}_\infty[\mathbb{I}_{\{M > k\}}] 
    =\mathbb{E}_\infty[\mathbb{I}_{\{M > k\}} \mathbb{I}_{\{M \geq k\}}] 
    &=\mathbb{E}_\infty \left[\mathbb{E}_\infty[\mathbb{I}_{\{M > k\}} \mathbb{I}_{\{M \geq k\}} | \mathcal{F}_{\eta_k}]\right]\\
    &= \mathbb{E}_\infty [P_{\infty}(M\geq k+1\mid\mathcal{F}_{\eta_k})\mathbb{I}_{\{M\geq k\}}]. \eqnum
\end{align*}

Next, we bound the probability $P_{\infty}(M\geq k+1\mid\mathcal{F}_{\eta_k})$. For this, note that 
\begin{equation}
    P_{\infty}(M\geq k+1\mid\mathcal{F}_{\eta_k})= \\
    1-P_{\infty}\left(\sum_{i=\eta_k+1}^n\tilde{z}_\rho(X_i)\geq \omega'  \;\text{for some} \; n>\eta_k\mid \mathcal{F}_{\eta_k}\right). 
\end{equation}
Now, note that due to the independence of the observations, 
\begin{equation*}
P_{\infty}\left(\sum_{i=\eta_k+1}^n\tilde{z}_\rho(X_i)\geq \omega'  \;\text{for some} \; n>\eta_k\mid \mathcal{F}_{\eta_k}\right) \\
= P_{\infty}\left(\sum_{i=1}^n\tilde{z}_\rho(X_i)\geq \omega'  \;\text{for some} \; n\right).
\end{equation*}
Next, for the right-hand side of the above equation, we have
\begin{align*}
P_{\infty}\left(\sum_{i=1}^n\tilde{z}_\rho(X_i)\geq \omega'  \;\text{for some} \; n\right) 
&= \lim_{m \rightarrow \infty} P_\infty \left( \max_{n : n \leq m} \sum_{i=1}^n \tilde{z}_\rho(X_i) > \omega' \right) \\
&=\lim_{m \rightarrow \infty} P_\infty \left( \max_{n : n \leq m} \exp \left( \sum_{i=1}^n \tilde{z}_\rho(X_i)\right) > e^{\omega'} \right)\\
&=\lim_{m \rightarrow \infty} P_\infty \left( \max_{n : n \leq m} \tilde{G}_n > e^{\omega'} \right).
\end{align*}
Since $\{\tilde{G}_n\}_{n\geq 1}$ is a nonnegative martingale under $P_{\infty}$ with mean 1, by Doob's submartingale inequality~\cite{doob1953stochastic}, we have:
\begin{align}
\label{eq:doobsresult}
P_\infty \left( \max_{n : n \leq m} \tilde{G}_n > e^{\omega'} \right) \leq \frac{\mathbb{E}[\tilde{G}_m]}{e^{\omega'}} = e^{-\omega'}
\end{align}
Thus,
\begin{align*}
\label{eq:eq2_11}
    P_{\infty}(M\geq k+1\mid\mathcal{F}_{\eta_k}) 
    &= 1-P_{\infty}\left(\sum_{i=\eta_k+1}^n\tilde{z}_\rho(X_i)\geq \omega'  \;\text{for some} \; n>\eta_k\mid \mathcal{F}_{\eta_k}\right)\\
    &=1 - P_{\infty}\left(\sum_{i=1}^n\tilde{z}_\rho(X_i)\geq \omega'  \;\text{for some} \; n\right)\\
    &= 1 - \lim_{m \rightarrow \infty} P_\infty \left( \max_{n : n \leq m} \tilde{G}_n > e^{\omega'} \right) \\
    & \geq 1 - e^{-\omega'}.  \eqnum
\end{align*}

\textit{Part 7:} Finally, from \eqref{eq:eq1}, we have 
\begin{equation}
\begin{split}
\label{eq:eq1_1111}
    P_{\infty}(M> k) &= \mathbb{E}_\infty [P_{\infty}(M\geq k+1\mid\mathcal{F}_{\eta_k})\mathbb{I}_{\{M\geq k\}}]\\
    &\geq (1-e^{-\omega'})P_{\infty}(M> k-1)\\
    &\geq (1-e^{-\omega'})^2 P_{\infty}(M> k-2)\\
    &\geq \dots \geq (1-e^{-\omega'})^k
    \end{split}
\end{equation}

Using (\ref{eq:eq1_1111}), we arrive at a geometric series:
\begin{align*}
    \mathbb{E}_\infty[M] = \sum_{k=0}^{\infty}P_{\infty}(M> k)\geq \sum_{k=0}^{\infty}(1-e^{-\omega'})^{k}= e^{\omega'}.
\end{align*}
Thus, we conclude that
$$\mathbb{E}_\infty[T_{\texttt{\textup{RSCUSUM}}}]\geq \mathbb{E}_\infty[\tilde{T}_{\texttt{RSCUSUM}}]\geq \mathbb{E}_\infty[M]\geq e^{\omega'}= e^{\rho \omega}.
 $$

\end{proof}

\subsection{\resp{Proof of Theorem~\ref{thm:cond_edd}}}

\begin{proof}
Consider the random walk that is defined by 
\begin{equation*}
    Z^{\prime}(n) = \sum_{i=1}^nz(X_i), \; \text{for}\; n\geq 1.
\end{equation*}
We examine another stopping time that is given by
\begin{equation*}
     T_{\texttt{\textup{RSCUSUM}}}^{\prime} \de \inf \{n\geq 1: Z^{\prime}(n) \geq \omega\}.
\end{equation*}
Next, for any $\omega$, define $R_{\omega}$ on $\{T_{\texttt{\textup{RSCUSUM}}}^{\prime} <\infty\}$ by 
\begin{equation*}
    R_{\omega} \de Z^{\prime}(T_{\texttt{\textup{RSCUSUM}}}^{\prime}) -\omega.
\end{equation*}
$R_{\omega}$ is the excess of the random walk over a stopping threshold $\omega$ at the stopping time $T_{\texttt{\textup{RSCUSUM}}}^{\prime}$.
Suppose the change-point $\nu =1$, then $X_1, X_2,\ldots, $ are i.i.d. following the distribution $P_1$. Let $\mu$ and $\sigma^2$ respectively denote the mean $\mathbb{E}_{1}[z(X_1)]$ and the variance $\text{Var}_1[z(X_1)]$. Note that 
\begin{equation*}
    \mu =\mathbb{E}_{1}[z(X_1)]= \mathbb{D}_{\texttt{\textup{F}}}(P_1\|Q_{\infty})-\mathbb{D}_{\texttt{\textup{F}}}(P_1\|Q_{1})>0,
\end{equation*}
and \begin{align*}
    \sigma^2 &= \text{Var}_1[z(X_1)] = \mathbb{E}_{1}[z(X_1)^2]-\left(\mathbb{D}_{\texttt{\textup{F}}}(P_1\|Q_{\infty})-\mathbb{D}_{\texttt{\textup{F}}}(P_1\|Q_{1})\right)^2. 
\end{align*}
Under the mild assumptions that:
\begin{align*}
&\mathbb{E}_{1}[\mathcal{S}_{\texttt{\textup{H}}}(X_1, Q_{\infty})^2] < \infty,\; \quad \mathbb{E}_{1}[\mathcal{S}_{\texttt{\textup{H}}}(X_1, Q_1)^2] < \infty,
\end{align*}
then by \cite[Theorem 1]{lorden1970excess},
\begin{equation*}
    \sup_{\omega \geq 0}\mathbb{E}_{1}[R_{\omega}]
    \leq \frac{\mathbb{E}_{1}[(z(X_1)^{+})^2]}{\mathbb{E}_{1}[z(X_1)]}
    \leq \frac{\mathbb{E}_{1}[(z(X_1))^2]}{\mathbb{E}_{1}[z(X_1)]}
    = \frac{\mu^2+\sigma^2}{\mu},
\end{equation*}
where $z(X)^{+} = \max (z(X), 0)$.
Hence, by Wald's lemma \cite{woodroofe1982nonlinear}
\begin{equation*}
    \mathbb{E}_{1}[T^{\prime}_{\texttt{RSCUSUM}}]=\frac{\omega}{\mu}+\frac{\mathbb{E}_{1}[{R_{\omega}}]}{\mu}\leq \frac{\omega}{\mu}+\frac{\mu^2+\sigma^2}{\mu^2},\;\forall \omega \geq 0.
\end{equation*}
Observe that for any $n$, $Z^{\prime}(n)\leq Z(n)$, and therefore $T_{\texttt{\textup{RSCUSUM}}} \leq T_{\texttt{\textup{RSCUSUM}}}^{\prime}$. Thus, 
\begin{equation}
\label{eq:cadd_result}
    \mathbb{E}_{1}[T_{\texttt{\textup{RSCUSUM}}}]\leq \mathbb{E}_{1}[T_{\texttt{\textup{RSCUSUM}}}^{\prime}]\leq \frac{\omega}{\mu}+\frac{\mu^2+\sigma^2}{\mu^2},\;\forall \omega \geq 0.
\end{equation}
The contribution of the term $\frac{\mu^2 + \sigma^2}{\mu^2}$ becomes negligible as $\omega \rightarrow \infty$. 
In addition, we have 
\begin{equation*}
    \mathcal{L}_{\texttt{\textup{CADD}}}(T_{\texttt{\textup{RSCUSUM}}}) = \mathbb{E}_{1}[T_{\texttt{\textup{RSCUSUM}}}]-1.
\end{equation*}
Thus, we conclude that 
\begin{equation*}
    \mathcal{L}_{\texttt{\textup{CADD}}}(T_{\texttt{\textup{RSCUSUM}}})\sim \frac{\omega}{ \mathbb{D}_{\texttt{\textup{F}}}(P_1\|Q_{\infty})-\mathbb{D}_{\texttt{\textup{F}}}(P_1\|Q_{1})}.
\end{equation*}
Similar arguments applies for $\mathcal{L}_{\texttt{\textup{WADD}}}(T_{\texttt{\textup{RSCUSUM}}})$.
\end{proof}

\nocite{csiszar}

\section*{Acknowledgment}
Sean Moushegian, Suya Wu, and Vahid Tarokh were supported in part by Air
Force Research Lab Award under grant number FA-8750-
20-2-0504. Jie Ding was supported in part by the Office
of Naval Research under grant number N00014-21-1-2590.
Taposh Banerjee was supported in part by the U.S. Army
Research Lab under grant W911NF2120295.

\ifCLASSOPTIONcaptionsoff
  \newpage
\fi

\bibliographystyle{IEEEtran}
\bibliography{wu_571}

\begin{thebibliography}{10}
\providecommand{\url}[1]{#1}
\csname url@samestyle\endcsname
\providecommand{\newblock}{\relax}
\providecommand{\bibinfo}[2]{#2}
\providecommand{\BIBentrySTDinterwordspacing}{\spaceskip=0pt\relax}
\providecommand{\BIBentryALTinterwordstretchfactor}{4}
\providecommand{\BIBentryALTinterwordspacing}{\spaceskip=\fontdimen2\font plus
\BIBentryALTinterwordstretchfactor\fontdimen3\font minus \fontdimen4\font\relax}
\providecommand{\BIBforeignlanguage}[2]{{%
\expandafter\ifx\csname l@#1\endcsname\relax
\typeout{** WARNING: IEEEtran.bst: No hyphenation pattern has been}%
\typeout{** loaded for the language `#1'. Using the pattern for}%
\typeout{** the default language instead.}%
\else
\language=\csname l@#1\endcsname
\fi
#2}}
\providecommand{\BIBdecl}{\relax}
\BIBdecl

\bibitem{veeravalli2014quickest}
V.~V. Veeravalli and T.~Banerjee, ``Quickest change detection,'' in \emph{Academic press library in signal processing}.\hskip 1em plus 0.5em minus 0.4em\relax Elsevier, 2014, vol.~3, pp. 209--255.

\bibitem{basseville1993detection}
M.~Basseville, I.~V. Nikiforov \emph{et~al.}, \emph{Detection of abrupt changes: theory and application}.\hskip 1em plus 0.5em minus 0.4em\relax Prentice Hall Englewood Cliffs, 1993, vol. 104.

\bibitem{poor2008quickest}
H.~V. Poor and O.~Hadjiliadis, \emph{Quickest detection}.\hskip 1em plus 0.5em minus 0.4em\relax Cambridge University Press, 2008.

\bibitem{tartakovsky2014sequential}
A.~Tartakovsky, I.~Nikiforov, and M.~Basseville, \emph{Sequential analysis: Hypothesis testing and changepoint detection}.\hskip 1em plus 0.5em minus 0.4em\relax CRC Press, 2014.

\bibitem{shiryaev1963optimum}
A.~N. Shiryaev, ``On optimum methods in quickest detection problems,'' \emph{Theory Probab. Appl.}, vol.~8, no.~1, pp. 22--46, 1963.

\bibitem{tartakovsky2005general}
A.~G. Tartakovsky and V.~V. Veeravalli, ``General asymptotic bayesian theory of quickest change detection,'' \emph{Theory of Probability \& Its Applications}, vol.~49, no.~3, pp. 458--497, 2005.

\bibitem{lorden1971procedures}
G.~Lorden, ``Procedures for reacting to a change in distribution,'' \emph{Ann. Math. Stat.}, pp. 1897--1908, 1971.

\bibitem{moustakides1986optimal}
G.~V. Moustakides, ``Optimal stopping times for detecting changes in distributions,'' \emph{Ann. Stat.}, vol.~14, no.~4, pp. 1379--1387, 1986.

\bibitem{lai1998information}
T.~L. Lai, ``Information bounds and quick detection of parameter changes in stochastic systems,'' \emph{IEEE Trans. Inf. Theory}, vol.~44, no.~7, pp. 2917--2929, 1998.

\bibitem{page1955test}
E.~Page, ``A test for a change in a parameter occurring at an unknown point,'' \emph{Biometrika}, vol.~42, no. 3/4, pp. 523--527, 1955.

\bibitem{pollak1985optimal}
M.~Pollak, ``Optimal detection of a change in distribution,'' \emph{Ann. Stat.}, pp. 206--227, 1985.

\bibitem{roberts1966comparison}
S.~Roberts, ``A comparison of some control chart procedures,'' \emph{Technometrics}, vol.~8, no.~3, pp. 411--430, 1966.

\bibitem{LeCun2006ATO}
Y.~LeCun, S.~Chopra, R.~Hadsell, M.~Ranzato, and F.~Huang, ``A tutorial on energy-based learning,'' in \emph{Predicting structured data}.\hskip 1em plus 0.5em minus 0.4em\relax The MIT Press, 2006, vol.~1.

\bibitem{song2020score}
Y.~Song, J.~Sohl{-}Dickstein, D.~P. Kingma, A.~Kumar, S.~Ermon, and B.~Poole, ``Score-based generative modeling through stochastic differential equations,'' \emph{International Conference on Learning Representations, (ICLR)}, May 2021.

\bibitem{wu_IT_2024}
S.~Wu, E.~Diao, T.~Banerjee, J.~Ding, and V.~Tarokh, ``Quickest change detection for unnormalized statistical models,'' \emph{IEEE Transactions on Information Theory}, 2023.

\bibitem{hyvarinen2005estimation}
A.~Hyv{\"a}rinen, ``Estimation of non-normalized statistical models by score matching.'' \emph{J. Mach. Learn. Res.}, vol.~6, no.~4, 2005.

\bibitem{unnikrishnan2011minimax}
J.~Unnikrishnan, V.~V. Veeravalli, and S.~P. Meyn, ``Minimax robust quickest change detection,'' \emph{IEEE Transactions on Information Theory}, vol.~57, no.~3, pp. 1604--1614, 2011.

\bibitem{woodroofe1982nonlinear}
M.~Woodroofe, \emph{Nonlinear renewal theory in sequential analysis}.\hskip 1em plus 0.5em minus 0.4em\relax SIAM, 1982.

\bibitem{gbrbm_without_tears}
R.~Liao, S.~Kornblith, M.~Ren, D.~J. Fleet, and G.~Hinton, ``Gaussian-bernoulli rbms without tears,'' \emph{arXiv preprint arXiv:2210.10318}, 2022.

\bibitem{song2019score}
Y.~Song and S.~Ermon, ``Generative modeling by estimating gradients of the data distribution,'' \emph{Advances in Neural Information Processing Systems (NeurIPS)}, Dec. 2019.

\bibitem{vincent2011connection}
P.~Vincent, ``A connection between score matching and denoising autoencoders,'' \emph{Neural computation}, vol.~23, no.~7, pp. 1661--1674, 2011.

\bibitem{wu2022score}
S.~Wu, E.~Diao, K.~Elkhalil, J.~Ding, and V.~Tarokh, ``Score-based hypothesis testing for unnormalized models,'' \emph{IEEE Access}, vol.~10, pp. 71\,936--71\,950, 2022.

\bibitem{Shao2019}
S.~Shao, P.~E. Jacob, J.~Ding, and V.~Tarokh, ``Bayesian model comparison with the hyv{\"{a}}rinen score: Computation and consistency,'' \emph{Journal of the American Statistical Association}, vol. 114, no. 528, pp. 1826--1837, 2019.

\bibitem{ding2019gradient}
J.~Ding, R.~Calderbank, and V.~Tarokh, ``Gradient information for representation and modeling,'' \emph{Advances in Neural Information Processing Systems}, vol.~32, Dec. 2019.

\bibitem{hutchinson}
M.~Hutchinson, ``A stochastic estimator of the trace of the influence matrix for laplacian smoothing splines,'' \emph{Communication in Statistics- Simulation and Computation}, vol.~18, pp. 1059--1076, 01 1989.

\bibitem{doob1953stochastic}
J.~L. Doob, \emph{Stochastic processes}.\hskip 1em plus 0.5em minus 0.4em\relax Wiley New York, 1953, vol.~7.

\bibitem{lorden1970excess}
G.~Lorden, ``On excess over the boundary,'' \emph{Ann. Math. Stat.}, vol.~41, no.~2, pp. 520--527, 1970.

\bibitem{csiszar}
I.~Csiszar and G.~Tusnady, ``Information geometry and alternating minimization procedures,'' \emph{Statistics and Decisions}, 1984.

\end{thebibliography}

\balance

\begin{IEEEbiographynophoto}{Sean Moushegian}
received the B.S. degree in electrical engineering from Tufts University in 2021 and is currently pursuing the Ph.D. degree in electrical and computer engineering at Duke University.   His research interests include machine learning, sequential analysis, and information theory.
\end{IEEEbiographynophoto}

\begin{IEEEbiographynophoto}{Suya Wu} received the B.S. degree in mathematics and statistics from Shandong University, Jinan, China, in 2017, the M.S. degree in statistics from the University of Minnesota, Twin Cities, in 2019, and the Ph.D. degree in electrical and computer engineering from Duke University in 2023. Her research interests include statistical methods and machine learning. 
\end{IEEEbiographynophoto}

\begin{IEEEbiographynophoto}{Enmao Diao} was born in Chengdu, Sichuan, China, in 1994. He received the B.S. degree in computer science and electrical engineering from the Georgia Institute of Technology in 2016, the M.S. degree in electrical engineering from Harvard University in 2018, and the Ph.D. degree in electrical engineering from Duke University in 2023. His research interests include distributed machine learning, efficient machine learning, and signal processing.
\end{IEEEbiographynophoto}

\begin{IEEEbiographynophoto}{Jie Ding}
(Senior Member, IEEE) received the B.S. degree from Tsinghua University, Beijing, China, in 2012, and the Ph.D. degree in engineering sciences from Harvard University in 2017. In 2018, he joined as a Faculty Member with the University of Minnesota, Twin Cities, where he is currently an Associate Professor with the School of Statistics, with a graduate faculty appointment in Electrical Engineering and Computer Science. His research interests include data science foundations and various AI applications.
\end{IEEEbiographynophoto}

\begin{IEEEbiographynophoto}{Taposh Banerjee}
received the Ph.D. degree in electrical and computer engineering from the University of Illinois at Urbana–Champaign. He is currently an Assistant Professor of industrial engineering with the University of Pittsburgh. His research interests include sequential analysis, stochastic optimal control, and high-dimensional statistics and probability. 
\end{IEEEbiographynophoto}

\begin{IEEEbiographynophoto}{Vahid Tarokh}
(Fellow, IEEE) received the Ph.D. degree in electrical and computer engineering from the University of Waterloo, Waterloo, ON, Canada, in 1995. He worked at AT\&T Laboratories-Research, until 2000. From 2000 to 2002, he was an Associate Professor with the Massachusetts Institute of Technology (MIT). In 2002, he joined Harvard University as a Hammond Vinton Hayes Senior Fellow of Electrical Engineering and a Perkins Professor in applied mathematics. In January 2018, he joined Duke University as the Rhodes Family Professor of electrical and computer engineering. In 2018, he was also a Gordon Moore Distinguished Research Fellow with the California Institute of Technology. His current research interests include foundations and applications of machine learning, statistics, and dynamical systems.
\end{IEEEbiographynophoto}

\end{document}